\documentclass[journal]{IEEEtran}

\pagenumbering{gobble}
\usepackage{stmaryrd}
\usepackage{xspace}

\usepackage{subfigure}
\usepackage{url}
\usepackage{cite}
\usepackage{amsmath}
\usepackage{amssymb}
\usepackage{amsthm}
\usepackage{bbold, amsmath}
\usepackage[font={small}]{caption}

\usepackage{tikz}
\usepackage{enumitem}

\newtheorem{proposition}{Proposition}
\newtheorem{definition}{Definition}

\newtheorem{lemma}{Lemma}

\usepackage{mathtools}
\DeclarePairedDelimiter\ceil{\lceil}{\rceil}
\DeclarePairedDelimiter\floor{\lfloor}{\rfloor}

\DeclareMathOperator*{\argmax}{arg\,max}

\title{A Dynamic Game Approach to Strategic Design of Secure and Resilient Infrastructure Network}

\author{Juntao Chen,~\IEEEmembership{Student Member,~IEEE}, Corinne Touati, and Quanyan Zhu,~\IEEEmembership{Member,~IEEE} 
\thanks{This research is partial supported by grants SES-1541164 and ECCS-1847056 from National Science Foundation (NSF), and by award 2015-ST-061-CIRC01, U. S. Department of Homeland Security.}%

\thanks{J. Chen and Q. Zhu are with the Department of Electrical and Computer Engineering, Tandon School of Engineering, New York University, Brooklyn, NY 11201, USA. email: \{jc6412,qz494\}@nyu.edu}
\thanks{
         C. Touati is with INRIA, F38330 Montbonnot Saint-Martin, France. email: corinne.touati@inria.fr}%
}

\begin{document}

\maketitle

\begin{abstract}
Infrastructure networks are vulnerable to both cyber and physical attacks. Building a secure and resilient networked system is essential for providing reliable and dependable services. To this end, we establish a two-player three-stage game framework to capture the dynamics in the infrastructure protection and recovery phases. Specifically, the goal of the infrastructure network designer is to keep the network connected before and after the attack, while the adversary aims to disconnect the network by compromising a set of links. With costs for creating and removing links, the two players aim to maximize their utilities while minimizing the costs. In this paper, we use the concept of subgame perfect equilibrium (SPE) to characterize the optimal strategies of the network defender and attacker. We derive the SPE explicitly in terms of system parameters. We further investigate the resilience planning of the defender and the strategic timing of attack of the adversary. Finally, we use case studies of UAV-enabled communication networks for disaster recovery to corroborate the obtained analytical results.
\end{abstract}
\begin{IEEEkeywords}
Dynamic Games, Security, Resilience, Infrastructure Networks, Subgame Perfect Equilibrium
\end{IEEEkeywords}

\section{Introduction}
Infrastructure networks are increasingly connected due to the integration of the information and communications technologies (ICTs). For example, the introduction of smart meters has enabled the communications between the users and the utility companies. Communications with roadside units in vehicular networks can provide safety warnings and traffic information.
However, infrastructure networks are vulnerable to not only physical attacks (e.g., terrorism, theft or vandalisms) but also cyber attacks. These attacks can damage the connectivity of the infrastructure system and thus results in the performance degradation and operational dysfunction.  For instance, an adversary can attack the road sensor units and create traffic congestion \cite{al2012survey}. As a result, the transportation system can break down due to the loss of roads. An adversary can also launch denial-of-service attacks to disconnect communication networks \cite{pelechrinis2011denial}, resulting in inaccessibility of relevant database for air travel or financial transactions.

The cyber-physical nature of the infrastructure can also enable the coordinated attacks on the infrastructure systems that allow an adversary to use both cyber and physical approaches to disconnect networks. Therefore, infrastructure protection plays a significant role to maintain the connectivity of the infrastructure networks. One way to protect the network is to create redundant links in the network so that networks can be still connected despite arbitrary removal of links.
This approach has been used in traffic networks by creating multiple modes of transportation, in communication networks by adding extra wired or wireless links, and in supply chain networks by making orders from multiple suppliers.

Adding link redundancy is an effective approach when there is no knowledge of the target of the attacker, and thus the objective of the network designer is to secure the network by making the network robust to arbitrary removal of a fixed number of links.
However, it becomes expensive and sometimes prohibitive when the cost for creating links is costly, and the attacker is powerful.
Therefore, a paradigm shift to emphasize the recovery and response to attacks is critical, and the infrastructure resilience becomes essential for developing post-attack mechanisms to mitigate the impacts. Recovering the network from attack is a top priority for designers especially in the service-oriented critical infrastructures including electric power and communication networks \cite{DOEReport}.
With a limited budget of resources, it is essential to develop an optimal post-attack healing mechanism as well as a pre-attack secure mechanism holistically and understand the fundamental tradeoffs between security and resilience in the infrastructures.

To this end, we establish a two-player dynamic three-stage network game formation problem in which the infrastructure network designer aims to keep the network connected before and after the attack, while the objective of the adversary is to keep the network disconnected after the attack. Note that each player has a cost on creating or removing links.
Specifically, at the first stage of the game, the infrastructure network designer first creates a network with necessary redundancies by anticipating the impact of adversarial behavior. Then, an adversary attacks at the second stage by removing a minimum number of links of the network. At the last stage of the game, the network designer can recover the network after the attack by adding extra links to the attacked network. 

The resilience of the network is characterized by the capability of the network to maintain connectivity after the attack and the time it takes to heal the network. The security of the infrastructure is characterized by the capability of the network to withstand the attack before healing. 
Adding a large number of redundancies to the network can prevent the attack from disconnecting the network, but this approach can be costly. Hence, it is important to make strategic decisions and planning to yield a protection and recovery mechanism for the infrastructure with a minimum cost.

We adopt subgame perfect Nash equilibrium (SPE) as the solution concept of the dynamic game. We observe that with sufficient capabilities of recovery, the infrastructure can mitigate the threats by reducing the incentives of the attackers. We analyze SPE of the game by investigating two different parameter regimes. Further, we develop an optimal post-attack network healing strategy to recover the infrastructure network.  When an attacker is powerful (attack cost is low), we observe that the defender needs to allocate more resources in securing the network to reduce the incentives of the attacker. In addition, agile resilience and fast response to attacks are crucial in mitigating the cyber threats in the infrastructures. 

In the infrastructure network, agile resilience requires more effort of the network designer. Thus, when taking the resilience cost into account, the designer selects a mechanism including the defense and recovery strategies as well as the resilience ability jointly that yields the best net payoff. The attacker can also be strategic in choosing its attacking time. We find that when the defender does not recover the network, the attacker prefers to attack in an early phase and receives the total rewards afterward. In contrast, the attacker chooses to compromise the network at a later phase (though he does not really attack since the network is not connected initially), extracting all the utility from the initial time until the attacking phase. We finally use case studies on communication networks recovery based on unmanned aerial vehicles (UAVs) to illustrate our obtained theoretical results.

The contributions of this paper are summarized as follows.
\begin{enumerate}
\item[1)] We establish a two-player three-stage dynamic game framework to study the secure and resilient infrastructure network design. By considering the costs for creating and removing links, the network defender aims to keep the network connected while otherwise for the attacker.
\item[2)] We provide a complete analysis of the subgame perfect Nash equilibrium of the dynamic game which includes the defense and recovery strategies of the network defender and the attacking strategy of the adversary.
\item[3)] We derive constructive results on the resilience planning which specifies the optimal response time to attacks  for the defender as well as the strategic timing of attack that determines when to compromise for the adversary.
\end{enumerate}

\subsection{Related Work}
Communication network connectivity plays an important role in information exchange in various scenarios including civilian and military applications. To enhance the network connectivity against attacks, a number of methods have been proposed including two-way cooperative network formation \cite{chen2012joint}, secrecy graph approach \cite{goel2011effect}, and $q$-composite scheme \cite{zhao2017resilience}. Our work aims to improve the network connectivity by strategically investing link resources.

Security is a critical concern for infrastructure networks \cite{lewis2014critical,alpcan2010network,chen2017security}. In \cite{brown2006defending}, the authors have used bilevel and trilevel optimization models to design secure critical infrastructure against terrorist attacks. \cite{ten2010cybersecurity} has provided a comprehensive survey on cyber security of critical infrastructures and evaluated the adversarial impact using an attack-tree-based methodology. In \cite{ferdowsi2017colonel}, the authors have investigated secure state estimation of interdependent critical infrastructures through proposing a Colonel Blotto game framework and captured the dynamics of various components holistically using a novel integrated state-space model.
 A cross-layer design approach has been proposed in \cite{zhu2015game,pawlick2019istrict} to optimize the performance of cyber-physical control systems where the security is modeled using a game-theoretic framework. To further enhance the system performance, the strategy designed by the network operator should take the cascading failure effects into account due to the couplings between distinct network components \cite{crucitti2004model,wang2008attack,li2012cascading}. Cascading failures over networks have been widely studied in the literature. The authors in \cite{duenas2009cascading} have shown that topological changes are needed to increase cascading robustness, and improvements in network component tolerance alone do not ensure system robustness against cascading failures. In \cite{ash2007optimizing}, the authors have proposed an evolutionary algorithm to improve the network performance to cascading failures, and showed that clustering, modularity, and long path lengths are critical in designing robust large-scale infrastructure. Furthermore, \cite{zhu2012dynamic} has proposed a dynamic game-theoretic approach to investigate the coupling between cyber security policy and robust control design of industrial control systems under cascading failures. In addition, \cite{liao2017cascading} and \cite{huang2017large} have designed protective strategies using stochastic games for energy systems under cascading failures due to attacks. The authors in \cite{chen2019-TIFS} have developed strategic security investment strategies in IoT networks by capturing bounded rationality of players due to cognitive constraints. Different with previous literature on analyzing network failures using game approaches, our work captures the sequential move of attacker and defender and models the network structure explicitly. Furthermore, by leveraging dynamic games, graph theory and optimization, we provide a complete equilibrium analysis of the problem by considering network security and resilience jointly which is not a focus in previous works.  

In addition to the system security, resilience is another crucial property that needs to be considered by infrastructure network designers \cite{smith2011network}. In \cite{zhu2011robust}, the authors have proposed a hybrid  framework for robust and resilient control design with applications to power systems by considering both the unanticipated events and deterministic uncertainties. The authors in \cite{altman2016resilience} have studied the resilience aspect of routing problem in parallel link communication networks using a two-player game and designed stable algorithms to compute the equilibrium strategies. \cite{fang2016resilience} has studied the critical infrastructure resilience by focusing on two metrics, optimal repair time and resilience reduction worth, to measure the criticality of various components in the system.  The network resilience in our framework is quantified by the recovery time after the attack which needs to be strategically designed.

Dynamic game approaches have been widely used to investigate the network security and resilience. For example, \cite{shen2014differential} has used a differential game to model the malware defense in wireless sensor networks where the system designer chooses strategies to minimize the overall cost. A
stochastic repeated game and an iterative learning mechanism have been adopted for moving target defense in networks \cite{zhuhybrid}. In \cite{clark2012deceptive}, a multistage Stackelberg game has been studied for developing deceptive routing strategies for nodes in a multihop wireless communication network. Furthermore, \cite{chen2017heterogeneous} has proposed a three-player three-stage game-theoretic framework including two network operators and one attacker to enable the secure design of multi-layer infrastructure networks. Our framework is also a three-stage game but differs from \cite{chen2017heterogeneous} since we have one central network designer and take the system resilience into account.

The adopted method and framework in our infrastructure network design are relevant to the recent advances in adversarial networks \cite{goyal2014attack,dziubinski2013network,
bravard2017optimal,chen2019} and strategic network formation games \cite{chen2016resilient,bala2000noncooperative,chen2016interdependent}. 
Furthermore, the current work extends our previous one \cite{chen2017dynamic} in multiple aspects.  First, our goal in this work is to design the optimal protection, resilience planning and recovery strategies for infrastructure networks in a holistic manner which differs from \cite{chen2017dynamic} in which the critical resilience planning factor is not considered. Second, we investigate the new topic of network resilience and the strategic behavior of attacker in Section \ref{resilience}. Third, we provide the detailed proofs of all theoretical results which were omitted in \cite{chen2017dynamic}.  Fourth, we extensively expand the introduction and related work sections as well as the case studies section with more examples to explicitly illustrate the newly obtained analytical results.

\subsection{Organization of the Paper}
 The rest of the paper is organized as follows. Section \ref{formulation} formulates the problem. Dynamic game analysis are presented in Section \ref{analysis}. Section \ref{SPE_analysis} derives the SPE of the dynamic game. Network resilience and strategic timing of attack are investigated in Section \ref{resilience}. Case studies are given in Section \ref{case_study}, and Section \ref{conclusion} concludes the paper. 

\section{Dynamic Game Formulation}\label{formulation}

In this section, we consider an infrastructure system represented by a set ${\mathcal N}$ of $n$ nodes. The infrastructure designer can design a network with redundant links before the attack for protection and adding new links after the attack for recovery. Note that the attack action of the adversary can be enabled through cyber and physical approaches due to the integration of modern infrastructures with information and communication technologies. The sequence of the actions taken by the designer and the attacker is described as follows:
\begin{itemize}
\item[(i)] A Designer ($D$) aims to create a network between these nodes and protect it against a malicious attack;
\item[(ii)] After some time of operation, an Adversary ($A$) puts an attack on the network by removing a subset of its links;
\item[(iii)] Once the $D$ realizes that an attack has been conducted, it has the opportunity to heal its network by constructing new links (or reconstructing some destroyed ones).
\end{itemize}
In addition, the timing of the actions also play a significant role in determining the optimal strategies of both players. We normalize the horizon of the event from the start of the preparation of infrastructure protection to a time point of interest as the time internal $[0, 1]$. This normalization is motivated by the observation made in \cite{DOEReport} where the consequences of  fifteen major storms occurring between 2004 and 2012 are plotted over a normalized duration of the event. We let $\tau$ and $\tau_R$ represent, respectively, the fraction of time spent before the attack (system is fully operational) and between the attack and the healing phase. This is illustrated in Fig.~\ref{fig:time}. 

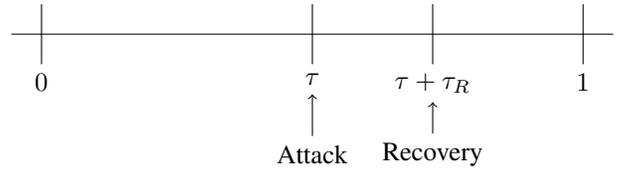
\begin{figure}[h]
\centering{
\begin{tikzpicture}[scale=0.8]
\draw (0,0) -- (10,0);
\draw (0.5, 0.5) -- (0.5,-0.5) node [anchor=north]{$0$};
\draw (5,0.5) -- (5,-0.5) node [anchor=north] (t) {$\tau$};
\draw (7,0.5) -- (7,-0.5) node [anchor=north] (tr) {$\tau+\tau_R$}; 
\draw (9.5,0.5) -- (9.5,-0.5) node [anchor=north]{$1$};
\node (A) at (5,-2) {Attack};
\node (R) at (7,-2) {Recovery};
\path[->] (A) edge node {} (t);
\path[->] (R) edge node {} (tr);
\end{tikzpicture}
}
\caption{Attack and defense time fractions. The attacker compromises the network at time $\tau$, and the defender recovers it after $\tau_R$ amount of time.}
\label{fig:time}
\end{figure}
The goal of the designer or the defender is to create protection and recovery mechanisms to keep its network operational, i.e., connected in this case. Let ${\mathcal E}_1$ be the set of links created by the defender initially (i.e., at time $0$). ${\mathcal E}_A \subseteq {\mathcal E}_1$ is the set of links removed (attacked) by the adversary and $\mathcal{E}_2$ is the set of links created by the defender after the attack (at fraction $\tau+\tau_R$ of the time horizon). Regardless of the time stamp, creating (resp. removing) links has a unitary cost $c_D$ (resp. $c_A$). The adversary aims to disconnect the network. Thus, for any set ${\mathcal E}$, we define $\mathbb{1}_E$ which equals $1$ if the graph $({\mathcal N},{\mathcal E})$ is connected and $0$ otherwise. Values of $\tau$, $\tau_R$, $c_A$ and $c_D$ are assumed as common knowledge to both $D$ and $A$ first, and later we investigate the strategic selections of $\tau$ and $\tau_R$. As a tie-breaker rule, if the output/utility is the same for $A$, then $A$ chooses to attack the network with the largest number of link removals. Similarly, $D$ chooses not to create links if its utility is the same.

\textit{Remark:} The link creation cost is treated as identical in the framework. Here, $c_D$ can capture various application scenarios. For example, in a large complex network with heterogeneous link costs, analyzing the strategy of $D$ becomes intractable. A viable choice for $D$ is to consider the mean link creation cost captured by $c_D$ which gives an approximation of the network. Another case is that $D$ considers the largest single link creation cost denoted by $c_D$, and thus it captures the worst case in which $D$ is conservative in designing the strategies. In sum, considering an identical $c_D$ is reasonable, and also it makes the technical analysis of the problem tractable.

The utility for the designer (resp. adversary) is equal to the fraction of time the network is connected (resp. disconnected) minus the costs of creating (resp. removing) the links. Hence, the payoff functions of the designer and the adversary are represented by $U_D$ and $U_A$, respectively, as follows:
$$
\begin{array}{@{}l@{}l@{}}
U_D({\mathcal E}_1, {\mathcal E}_2, {\mathcal E}_A) =  &(1-\tau-\tau_R) \mathbb{1}_{E_1\backslash E_A\cup E_2} +  \tau \mathbb{1}_{E_1} \\
 & \hfill + \tau_R \mathbb{1}_{E_1\backslash E_A} -c_D (|{\mathcal E}_1| + |{\mathcal E}_2|), \\
U_A({\mathcal E}_1, {\mathcal E}_2, {\mathcal E}_A) = & (1-\tau-\tau_R) (1-\mathbb{1}_{E_1\backslash E_A\cup E_2}) -c_A |{\mathcal E}_A| \\
 &\hfill +  \tau (1-\mathbb{1}_{E_1})+ \tau_R (1-\mathbb{1}_{E_1\backslash E_A}), 
\end{array}
$$
where $|\cdot|$ denotes the cardinality of a set. In addition, $\mathbb{1}_{E_1\backslash E_A\cup E_2}$ means that a network including $|\mathcal{N}|=n$ nodes contains a set ${\mathcal E}_1\backslash {\mathcal E}_A\cup {\mathcal E}_2$ of links. Note that if the fraction of time and the cost of links metrics cannot be directly added up in the utility functions, we can use a conversion factor to transform one metric to the other. Therefore, the formulated utility functions for $D$ and $A$ are still valid.

Since both players are strategic, we study the SPE and analyze the strategies of the players to the sets $({\mathcal E}_1, {\mathcal E}_A, {\mathcal E}_2)$. Thus, we seek triplets $({\mathcal E}_1, {\mathcal E}_A, {\mathcal E}_2)$ such that ${\mathcal E}_2$ is a best response to $({\mathcal E}_1,{\mathcal E}_A)$ and that given ${\mathcal E}_1$, $({\mathcal E}_A, {\mathcal E}_2)$ is also a SPE. In other words, the SPE involves the analysis of the following three sequentially nested problems starting from the last stage of the designer's recovery problem to the first stage of the designer's protection problem:
\begin{itemize}
\item[(i)] Given the strategies ${\mathcal E}_1$ and ${\mathcal E}_A$, player $D$ chooses\\ ${\mathcal E}_2^*({\mathcal E}_1, {\mathcal E}_A) \in \argmax_{{\mathcal E}_2} U_D ({\mathcal E}_1, {\mathcal E}_A, {\mathcal E}_2)$;
\item[(ii)] Given ${\mathcal E}_1$, the adversary chooses\\ ${\mathcal E}_A^*({\mathcal E}_1) \in \argmax_{{\mathcal E}_A} U_A ({\mathcal E}_1, {\mathcal E}_A, {\mathcal E}_2^*({\mathcal E}_1, {\mathcal E}_A))$;
\item[(iii)] Player $D$ chooses\\ ${\mathcal E}_1^* \in  \argmax_{{\mathcal E}_1}U_D ({\mathcal E}_1, {\mathcal E}_A^*({\mathcal E}_1), {\mathcal E}_2^*({\mathcal E}_1, {\mathcal E}_A))$.
\end{itemize}
The equilibrium solution $({\mathcal E}_1, {\mathcal E}_A, {\mathcal E}_2)$ that solves the above three problems consistently is an SPE of the two-player dynamic game.

\textit{Comments on the game formulation:} In the established model, the attacking time $\tau$ and attacker's cost $c_A$ are assumed to be known by $D$. More practically, $D$ may have no perfect information on the attacker's parameters, and only the distributions of $\tau$ and $c_A$ are available. Then, $D$ can calculate the expected values of $\tau$ and $c_A$. The analysis in the paper is still valid to design the defensive strategy of $D$ at time 0. However, $A$'s behavior may not be the same as expected by $D$ which leads to a random network after the attack. Thus, $D$ needs to determine the healing strategy $\mathcal{A}_2$ again at time $\tau$. This creates another layer of decision-making problem for $D$ which is an optimization problem itself instead of a game as $A$'s behavior has been revealed. Other than capturing the unknown parameters $\tau$ and $c_A$ through their expected values, we can also model the game by considering the incomplete information directly. This yields a formulation of  dynamic Bayesian game with a random type parameter including $\tau$ and $c_A$ which is nontrivial to solve.

\section{Dynamic Game Analysis}\label{analysis}
In this section, we analyze the possible configurations of the infrastructure network at SPE.

We first note that $c_A$ should be not too large, since otherwise $A$ cannot be a threat to $D$. Similarly, $c_D$ should be sufficiently small so that the $D$ can create a connected network:
\begin{lemma}
If $c_A > 1- \tau$, then $A$ has no incentive to attack any link. In addition,
if $c_D > \frac{1}{n-1}$, then $D$ has no incentive to create a connected network.
\end{lemma}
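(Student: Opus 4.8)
The plan is to prove the two assertions separately, each by a crude bound that makes the ``aggressive'' action strictly worse than a passive alternative, so that neither the tie-breaking rule nor a delicate case analysis is needed.

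\emph{Adversary's claim.} For a fixed ${\mathcal E}_1$, I would compare the adversary's equilibrium payoff when it removes a nonempty link set ${\mathcal E}_A$ with $|{\mathcal E}_A| = k \ge 1$ against its payoff when it removes nothing. The term $\tau(1-\mathbb{1}_{E_1})$ of $U_A$ is independent of ${\mathcal E}_A$, so it drops out of the comparison. For the two remaining indicator terms I would simply use $\mathbb{1}_{(\cdot)}\in\{0,1\}$: in the attack subgame they jointly contribute at most $(1-\tau-\tau_R)+\tau_R = 1-\tau$, while in the no-attack subgame they contribute at least $0$. This estimate is robust to the fact that $D$'s recovery ${\mathcal E}_2^\ast({\mathcal E}_1,{\mathcal E}_A)$ may differ between the two subgames, because I upper-bound the attacked payoff by its largest conceivable value and lower-bound the unattacked payoff by discarding nonnegative terms. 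Hence the change in $A$'s payoff from attacking is at most $(1-\tau) - c_A k \le (1-\tau) - c_A < 0$ under the hypothesis $c_A > 1-\tau$, so removing any link is strictly suboptimal for $A$; the strict inequality means the tie-break for $A$ is never invoked.

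\emph{Designer's claim.} Here I would invoke the elementary fact that a connected graph on $n$ vertices has at least $n-1$ edges. In any strategy profile for which the graph on ${\mathcal N}$ is connected in one of the three phases, either $({\mathcal N},{\mathcal E}_1)$, $({\mathcal N},{\mathcal E}_1\backslash{\mathcal E}_A)$, or $({\mathcal N},{\mathcal E}_1\backslash{\mathcal E}_A\cup{\mathcal E}_2)$ is connected, and each of these cases forces $|{\mathcal E}_1|+|{\mathcal E}_2|\ge n-1$. Since the three connectivity rewards in $U_D$ carry coefficients $(1-\tau-\tau_R)$, $\tau$, $\tau_R$ summing to $1$ and each indicator is at most $1$, I obtain $U_D \le 1 - c_D(|{\mathcal E}_1|+|{\mathcal E}_2|) \le 1 - c_D(n-1) < 0$ whenever $c_D > \frac{1}{n-1}$. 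In contrast, the profile in which $D$ builds no link at either stage (so that $A$ necessarily removes none, and all indicators vanish) yields $U_D = 0$. Therefore every profile producing a connected network is strictly dominated for $D$ by the empty-network profile, which is exactly the claim.

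I do not anticipate a substantive obstacle; the only care needed is bookkeeping. In the adversary part one must ensure the estimate survives $D$'s recovery response changing across subgames, which the one-sided bounds handle. In the designer part one must cover the possibility of $D$ assembling the connected graph wholly or partly at the recovery stage, which is exactly why the cost bound is stated in terms of $|{\mathcal E}_1|+|{\mathcal E}_2|$ rather than $|{\mathcal E}_1|$ alone; one should also note that the statement tacitly assumes $n\ge 2$, so that $n-1\ge 1$ and the empty graph is indeed disconnected.
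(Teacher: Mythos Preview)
Your proposal is correct and follows essentially the same approach as the paper's proof: for the adversary you isolate the $\tau(1-\mathbb{1}_{E_1})$ term, upper-bound the attack payoff by $1-\tau-c_A$ and lower-bound the no-attack payoff by $0$, exactly as in the paper; for the designer you compare the null strategy (payoff $0$) against any connected outcome (payoff at most $1-(n-1)c_D$), again matching the paper. Your treatment of the designer's claim is slightly more careful than the paper's in that you explicitly verify $|{\mathcal E}_1|+|{\mathcal E}_2|\ge n-1$ across all three connectivity phases, but this is a refinement of the same argument rather than a different route.
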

\begin{proof}
Suppose that $c_A > 1- \tau$. Let ${\mathcal E}_1$ be given and $\phi := \tau (1-\mathbb{1}_{E_1})$. If $A$ decides not to remove any link, then its payoff is $\phi + \tau_R (1-\mathbb{1}_{E_1})+(1-\tau-\tau_R) (1-\mathbb{1}_{E_1\cup E_2}) \geq \phi$. Otherwise, $|{\mathcal E}_A| \geq 1$ and 
$U_A({\mathcal E}_1, {\mathcal E}_2, {\mathcal E}_A) \leq  \phi + (1-\tau-\tau_R) (1-\mathbb{1}_{E_1\backslash E_A\cup E_2}) -c_A
+ \tau_R (1-\mathbb{1}_{E_1\backslash E_A}) \leq \phi + 1-\tau-c_A < \phi$. 
Thus, it is a best response for $A$ to play ${\mathcal E}_A = \emptyset$.
Similarly, if $c_D > \frac{1}{n-1}$, then if $D$ plays ${\mathcal E}_1 = {\mathcal E}_2 = \emptyset$, its utility is $0$. Otherwise, its utility is bounded above by $1-(n-1) c_D$ which corresponds to a connected tree network with the minimum number of links. 
\end{proof}

In the following, we thus suppose that $c_A < 1-\tau$ and $c_D < \frac{1}{n-1}$.


Note that the SPE can correspond only to a set of situations summarized as follows.
\begin{lemma}\label{lem:situations}
Suppose that $({\mathcal E}_1, {\mathcal E}_A, {\mathcal E}_2)$ is an SPE. Then, we are necessarily in one of the situations given in Table~\ref{tab}.
\begin{table}[h]
$$\begin{array}{c|ccc}
\mathrm{Situation} & \mathbb{1}_{E_1} & \mathbb{1}_{E_1\backslash E_A} & \mathbb{1}_{E_1\backslash E_A\cup E_2}\\
\hline
1 & 1 & 1 & 1\\
2 & 1 & 0 & 1\\
3 & 1 & 0 & 0\\
4 & 0 & 0 & 1\\
5 & 0 & 0 & 0
\end{array}$$
\caption{Different potential combinations of values of $\mathbb{1}_{E_1}$, $\mathbb{1}_{E_1\backslash E_A}$ and $\mathbb{1}_{E_1\backslash E_A\cup E_2}$ at the SPE.}\label{tab}
\end{table}
\end{lemma}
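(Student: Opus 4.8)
The plan is to observe that the table is exactly the list of value-triples compatible with the monotonicity of graph connectivity under edge addition, so that the SPE hypothesis is not really used beyond guaranteeing that $({\mathcal E}_1,{\mathcal E}_A,{\mathcal E}_2)$ is an admissible triple with ${\mathcal E}_A\subseteq{\mathcal E}_1$ and with ${\mathcal E}_2$ only adding links at the recovery stage.

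First I would record the elementary graph fact: for a fixed vertex set ${\mathcal N}$, connectivity is monotone in the edge set, i.e. if ${\mathcal F}\subseteq{\mathcal F}'$ then $\mathbb{1}_{F}\le\mathbb{1}_{F'}$ (any ${\mathcal F}$-path between two vertices is still an ${\mathcal F}'$-path, so a connected $({\mathcal N},{\mathcal F})$ forces $({\mathcal N},{\mathcal F}')$ connected). Then I would apply this twice, with the quantity ${\mathcal E}_1\backslash{\mathcal E}_A$ as the ``small'' set in both cases: since ${\mathcal E}_A\subseteq{\mathcal E}_1$ we have ${\mathcal E}_1\backslash{\mathcal E}_A\subseteq{\mathcal E}_1$, hence $\mathbb{1}_{E_1\backslash E_A}\le\mathbb{1}_{E_1}$; and since the recovery stage only creates links, ${\mathcal E}_1\backslash{\mathcal E}_A\subseteq({\mathcal E}_1\backslash{\mathcal E}_A)\cup{\mathcal E}_2$, hence $\mathbb{1}_{E_1\backslash E_A}\le\mathbb{1}_{E_1\backslash E_A\cup E_2}$. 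In particular, whenever $\mathbb{1}_{E_1\backslash E_A}=1$ we are forced to have $\mathbb{1}_{E_1}=\mathbb{1}_{E_1\backslash E_A\cup E_2}=1$.

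Finally I would enumerate the $2^3=8$ a priori triples $(\mathbb{1}_{E_1},\mathbb{1}_{E_1\backslash E_A},\mathbb{1}_{E_1\backslash E_A\cup E_2})$. The two inequalities above eliminate precisely the three triples in which $\mathbb{1}_{E_1\backslash E_A}=1$ but some other entry is $0$, namely $(1,1,0)$, $(0,1,1)$ and $(0,1,0)$; the five survivors $(1,1,1),(1,0,1),(1,0,0),(0,0,1),(0,0,0)$ are exactly Situations $1$--$5$ of Table~\ref{tab}. I do not anticipate a genuine obstacle: the only points requiring care are parsing $E_1\backslash E_A\cup E_2$ as $({\mathcal E}_1\backslash{\mathcal E}_A)\cup{\mathcal E}_2$ and using that the third stage adds rather than deletes links, both of which are baked into the model of Section~\ref{formulation}. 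If desired, one could also remark that this lemma holds for every admissible strategy profile, not only at an SPE.
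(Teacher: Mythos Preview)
Your proposal is correct and follows essentially the same argument as the paper: both start from the eight a priori triples and use the monotonicity of connectivity under edge inclusion (namely $\mathbb{1}_{E_1\backslash E_A}\le\mathbb{1}_{E_1}$ and $\mathbb{1}_{E_1\backslash E_A}\le\mathbb{1}_{E_1\backslash E_A\cup E_2}$) to rule out $(0,1,0)$, $(0,1,1)$ and $(1,1,0)$. Your observation that the SPE hypothesis is not actually needed---the conclusion holds for any admissible triple---is accurate and worth keeping.
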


\begin{proof}
Note that, in total, $8$ situations should be possible. However, if $ \mathbb{1}_{E_1} = 0$, then it is impossible that $\mathbb{1}_{E_1\backslash E_A} = 1$. Therefore, the situations where $(\mathbb{1}_{E_1}, \mathbb{1}_{E_1\backslash E_A}, \mathbb{1}_{E_1\backslash E_A\cup E_2})$ equaling to $(0,1,0)$ and $(0,1,1)$ are not possible. 
Further, if $\mathbb{1}_{E_1\backslash E_A}=1$, then it is impossible that $\mathbb{1}_{E_1\backslash E_A\cup E_2}=0$. Thus, the situation $(\mathbb{1}_{E_1}, \mathbb{1}_{E_1\backslash E_A}, \mathbb{1}_{E_1\backslash E_A\cup E_2})=(1,1,0)$ is impossible. All other combinations are summarized in Table~\ref{tab}. 
\end{proof}

In Situations 4 and 5, $D$ does not create a connected network in the beginning, and thus $A$ has no incentive to attack the network at phase $\tau$.
The structure of the SPE depends on the values of the parameters of the game. In particular, it depends on whether $D$ has incentive to fully reconstruct (heal) the system after the attack of $A$. More precisely, if $1-\tau-\tau_R > (n-1)c_D$, then $D$ prefers to heal the network even if all links have been compromised by the attacker. Otherwise, there should be a minimum number of links remained after the attack for the $D$ to heal the network at the SPE. We sequentially analyze these two cases in Sections~\ref{sec:1} and \ref{sec:2}, respectively.

\section{SPE Analysis of the Dynamic Game}\label{SPE_analysis}
Depending on the parameters, we derive SPE of the dynamic game in two regimes: $1-\tau-\tau_R > (n-1)c_D$ and the otherwise in this section.

Before presenting the results, we first present the definition of Harary network \cite{harary} which plays an essential role in the SPE analysis. For a network containing $n$ nodes being resistant to $k$ link attacks, one necessary condition is that each node should have a degree of at least $k+1$, yielding the total number of links more than $\ceil*{\frac{(k+1)n}{2}}$, where $\ceil*{\cdot}$ denotes the ceiling operator. Harary network presented below can achieve this lower bound on the number of required links.
\begin{definition}[Harary Network \cite{harary}]
In a network containing $n$ nodes, Harary network is the optimal design that uses the minimum number of links equaling $\ceil*{\frac{(k+1)n}{2}}$ for the network still being connected after removing any $k$ links. 
\end{definition}
The constructive method of general Harary network can be described with cycles as follows. It first creates the links between node $i$ and node $j$ such that $(|i-j|\mod n)=1$, and then $(|i-j|\mod n)=2$, etc. When the number of nodes is odd, then the last cycle of link creation is slightly different since $\frac{(k+1)n}{2}$ is not an integer. However, the bound $\ceil*{\frac{(k+1)n}{2}}$ can be still be achieved.  For clarity, we illustrate three cases in Fig. \ref{Harary_figs} with $n=5,\ 7$ under different security levels $k=2,\ 3$.

\begin{figure}[!t]
\begin{centering}
\includegraphics[width=0.9\columnwidth]{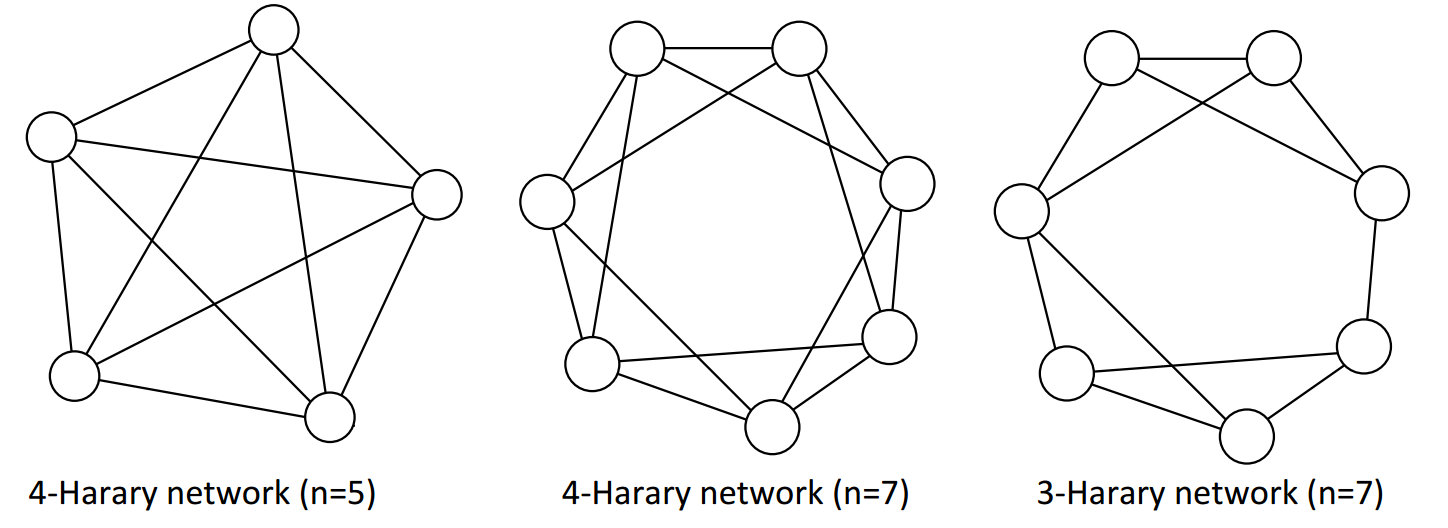} 
\par\end{centering}
\caption{\label{Harary_figs}
Illustration of Harary networks with different number of nodes and security levels.}
\end{figure}

Another critical network topology used in the analysis is the tree network defined as follows.
\begin{definition}[Tree network \cite{godsil2013algebraic}]
A tree is an undirected graph in which any two nodes are connected by exactly one path. Equivalently, the network is a tree if and only if it is connected and acyclic (contains no cycles).
\end{definition} 

\subsection{Regime 1: $1-\tau-\tau_R > (n-1)c_D$}
\label{sec:1}

In the case where $1-\tau-\tau_R > (n-1)c_D$, $D$ always reconstructs the network to be connected after the attack. The potential SPE can occur in only three of the Situations in Table~\ref{tab}, and we summarize them in the following proposition.
\begin{proposition}\label{prop:1}
Suppose that $1-\tau-\tau_R > (n-1)c_D$ and let $k_A^R \equiv \floor*{\frac{\tau_R}{c_A}}$, where $\floor*{\cdot}$ denotes the floor operator (resp. to the ceiling operator $\ceil*{\cdot}$). Note that $k_A^R$ is the largest number of links that $A$ can compromise to have a nonnegative payoff. Then, the SPE of the game is unique and satisfies:
\begin{itemize}
\item[(1)] If $\tau_R < c_A$, then $U_D = 1-(n-1)c_D$ and $U_A = 0$ (Situation $1$).
\item[(2)] Otherwise, i.e., $\tau_R \geq c_A$, and
\begin{itemize}
\item[(i)] if $\tau > c_D$ and $\tau_R > c_D \ceil*{\frac{n(k_A^R-1)}{2}}$ or if $\tau<c_D$ and $ \tau+\tau_R > c_D \ceil*{\frac{n(k_A^R-1)}{2}+1}$, then the SPE 
satisfies \\$\left\{\begin{array}{l}
U_D = 1-c_D \ceil*{\frac{n(k_A^R+1)}{2}} \\
U_A=0
\end{array}\right.$ (Situation $1$).
\item[(ii)] If $\tau > c_D$ and $\tau_R < c_D \ceil*{\frac{n(k_A^R-1)}{2}}$,  then the SPE satisfies $\left\{\begin{array}{l}
U_D = 1-\tau_R-n c_D \\ U_A = \tau_R-c_A \end{array}\right.$ (Situation $2$).
\item[(iii)] If $\tau < c_D$ and $ \tau+\tau_R < c_D \ceil*{\frac{n(k_A^R-1)}{2}+1}$, then the SPE 
satisfies\\
$\left\{\begin{array}{l}
U_D = 1-\tau-\tau_R-(n-1)c_D\\
U_A=\tau+\tau_R 
\end{array}\right.$ (Situation $4$).
\end{itemize}
\end{itemize}
\end{proposition}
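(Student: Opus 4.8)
The plan is to solve the three‑stage game by backward induction, following the three nested best‑response problems (i)--(iii) of Section~\ref{formulation}. \emph{Stage (i) (recovery).} Fix $\mathcal{E}_1,\mathcal{E}_A$ and let $p$ be the number of connected components of $(\mathcal{N},\mathcal{E}_1\backslash\mathcal{E}_A)$. Only $(1-\tau-\tau_R)\mathbb{1}_{E_1\backslash E_A\cup E_2}-c_D|\mathcal{E}_2|$ depends on $\mathcal{E}_2$; reconnecting the $p$ components needs at least $p-1$ new links, and since $p\le n$ and, in this regime, $1-\tau-\tau_R>(n-1)c_D\ge(p-1)c_D$, it is strictly optimal for $D$ to heal, using exactly $p-1$ links. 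Hence at every SPE $\mathbb{1}_{E_1\backslash E_A\cup E_2}=1$ — which already eliminates Situations $3$ and $5$ of Table~\ref{tab} — and $|\mathcal{E}_2|=p-1$.

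\emph{Stage (ii) (attack).} Given $\mathcal{E}_1$, substituting $|\mathcal{E}_2|=p-1$ and $\mathbb{1}_{E_1\backslash E_A\cup E_2}=1$ leaves $A$ maximizing $\tau_R(1-\mathbb{1}_{E_1\backslash E_A})-c_A|\mathcal{E}_A|$ (the remaining terms of $U_A$ being constant given $\mathcal{E}_1$). If $(\mathcal{N},\mathcal{E}_1)$ is disconnected then $\mathbb{1}_{E_1\backslash E_A}=0$ for all $\mathcal{E}_A$, so $A$ removes nothing. If $(\mathcal{N},\mathcal{E}_1)$ is connected, disconnecting it requires deleting at least $\lambda(\mathcal{E}_1)$ edges (its edge connectivity), a minimum‑cardinality edge cut is inclusion‑minimal and hence splits the graph into \emph{exactly two} components, and $A$'s best disconnecting payoff is $\tau_R-c_A\lambda(\mathcal{E}_1)$; comparing with the payoff $0$ of not attacking and using the tie‑breaking rule, $A$ attacks — removing a minimum cut of $\lambda(\mathcal{E}_1)$ edges, leaving two components — if and only if $\lambda(\mathcal{E}_1)\le k_A^R=\floor*{\tau_R/c_A}$.

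\emph{Stage (iii) (design) and conclusion.} Thus $D$'s first move is of exactly one of three types, and within each the optimal graph and the pair $(U_D,U_A)$ are forced: (A) a disconnected $\mathcal{E}_1$ (then $\mathcal{E}_A=\emptyset$, $D$ heals with a spanning tree, and the tie‑breaker picks $\mathcal{E}_1=\emptyset$), giving Situation $4$, $U_D=1-\tau-\tau_R-(n-1)c_D$, $U_A=\tau+\tau_R$; (B) a connected $\mathcal{E}_1$ with $\lambda(\mathcal{E}_1)\le k_A^R$ (so $A$ cuts it in two and $D$ re‑adds one edge), where minimizing $|\mathcal{E}_1|$ forces a tree, giving Situation $2$, $U_D=1-\tau_R-nc_D$, $U_A=\tau_R-c_A$; and (C) a connected $\mathcal{E}_1$ with $\lambda(\mathcal{E}_1)\ge k_A^R+1$ (so $A$ does not attack), where minimizing $|\mathcal{E}_1|$ forces the Harary network $H_{k_A^R+1,n}$ when $k_A^R\ge1$ (and a spanning tree when $k_A^R=0$), giving Situation $1$, $U_D=1-c_D\ceil*{\tfrac{n(k_A^R+1)}{2}}$, $U_A=0$; here I would invoke the optimality of the Harary network and the fact that a connected graph on $n$ nodes has at least $n-1$ edges. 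If $\tau_R<c_A$ then $k_A^R=0$, type (B) is vacuous, type (C) reduces to a spanning tree and dominates type (A), yielding case (1). If $\tau_R\ge c_A$, all three types exist and it remains to compare: $U_D^{(B)}-U_D^{(A)}=\tau-c_D$, while $\ceil*{\tfrac{n(k_A^R+1)}{2}}-n=\ceil*{\tfrac{n(k_A^R-1)}{2}}$ gives $U_D^{(C)}-U_D^{(B)}=\tau_R-c_D\ceil*{\tfrac{n(k_A^R-1)}{2}}$, and $\ceil*{\tfrac{n(k_A^R+1)}{2}}-(n-1)=\ceil*{\tfrac{n(k_A^R-1)}{2}+1}$ gives $U_D^{(C)}-U_D^{(A)}=\tau+\tau_R-c_D\ceil*{\tfrac{n(k_A^R-1)}{2}+1}$. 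The signs of these three differences under the stated strict inequalities place us in exactly one of cases (i), (ii), (iii), and in each one type strictly dominates the other two, so the equilibrium payoffs and configuration are unique.

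\emph{Expected main obstacle.} The conceptual content is modest; the difficulty is bookkeeping. The most delicate step is the first‑stage reduction — ruling out all connected topologies other than the three named candidates (e.g.\ cyclic graphs that are still attacked, or graphs over‑protected beyond $\lambda=k_A^R+1$) — which needs the two‑component property of minimum cuts, the Harary optimality bound, and the $n-1$ lower bound, all handled together with the tie‑breaking rules; in particular one must be careful at the boundary $c_A\lambda(\mathcal{E}_1)=\tau_R$ for $A$ and the degenerate case $\tau=c_D$ for $D$. The ceiling identities above, while routine, must be written out carefully to recover the exact thresholds appearing in the proposition.
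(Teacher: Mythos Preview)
Your proposal is correct and follows essentially the same route as the paper: backward induction eliminates Situations $3$ and $5$ via the healing bound, reduces the remaining play to the three candidate designs (empty, tree, Harary) with the payoffs of Table~\ref{tab:SPE}, and then selects among them by pairwise comparison of $U_D$. Your write-up is in fact more explicit than the paper's, which simply states that the thresholds ``are obtained via comparing $D$'s utility $U_D$ at various SPEs in Table~\ref{tab:SPE}''; the ceiling identities $\ceil*{n(k_A^R+1)/2}-n=\ceil*{n(k_A^R-1)/2}$ and $\ceil*{n(k_A^R+1)/2}-(n-1)=\ceil*{n(k_A^R-1)/2+1}$ you spell out are exactly what is needed to recover the stated conditions.
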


Proposition \ref{prop:1} is a direct consequence of the following lemma. Note that the conditions in Proposition \ref{prop:1} are obtained via comparing $D$'s utility $U_D$ at various SPEs in Table~\ref{tab:SPE}.

\begin{lemma}\label{lem:case1}
Suppose that $1-\tau-\tau_R > (n-1)c_D$. The potential SPEs have the properties given in Table~\ref{tab:SPE}.
\begin{center}
\begin{table}[h]
{\small
$$\hspace{-1em}\begin{array}{|@{}c@{\,}|@{}c@{\;}c@{\;}c@{\,}|@{\,}c@{\;\;}c@{}|}
\mathrm{Situation} & |{\mathcal E}_1| & |{\mathcal E}_A| & |{\mathcal E}_2| & U_D & U_A \\
\hline
1 \& k_A^R>0 & \ceil*{\frac{n(k_A^R+1)}{2}} & 0 & 0 & 1-c_D \ceil*{\frac{n(k_A^R+1)}{2}}& 0\\
1 \& k_A^R=0 & n-1 & 0 & 0 & 1-(n-1) c_D & 0\\
2 & n-1 & 1 & 1 & 1- \tau_R-n c_D & \tau_R-c_A\\
4 & 0 & 0 & n-1 & 1-\tau-\tau_R-(n-1)c_D & \tau+\tau_R\\
\hline
\end{array}$$
}
\hfill{}
\caption{Different potential SPEs when $1-\tau-\tau_R > (n-1)c_D$ (Note: $k_A^R = \floor*{\frac{\tau_R}{c_A}}$).}
\label{tab:SPE}
\end{table}
\end{center}
\end{lemma}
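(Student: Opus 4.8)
The plan is to analyze, by backward induction, each of the four candidate SPE situations from Table~\ref{tab} (Situations~1, 2, 4, noting that Situations~3 and 5 are excluded in this regime because $1-\tau-\tau_R>(n-1)c_D$ means $D$ would always prefer to heal to a connected network rather than leave it disconnected), and in each case determine the equilibrium link counts $|{\mathcal E}_1|,|{\mathcal E}_A|,|{\mathcal E}_2|$ and the resulting utilities. I would proceed stage by stage. \emph{Stage 3 (recovery):} given $({\mathcal E}_1,{\mathcal E}_A)$, since $1-\tau-\tau_R>(n-1)c_D$, $D$'s best response is to add the minimum number of links to make ${\mathcal E}_1\backslash{\mathcal E}_A\cup{\mathcal E}_2$ connected; if the post-attack graph is already connected, $D$ adds nothing (by the tie-breaker rule), otherwise $D$ adds just enough links to join the components (at most $n-1$, exactly $n-1$ if everything was destroyed). \emph{Stage 2 (attack):} given ${\mathcal E}_1$, $A$ anticipates the full healing and therefore only ever benefits from the \emph{window} of length $\tau_R$ during which the network stays disconnected between attack and recovery (plus the $\tau$ window if ${\mathcal E}_1$ were already disconnected, which is Situation~4). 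So $A$'s payoff from attacking is $\tau_R\mathbb{1}_{E_1\backslash E_A=0}-c_A|{\mathcal E}_A|$ (in Situations 1--3, where ${\mathcal E}_1$ is connected), and disconnecting ${\mathcal E}_1$ by removing the cheapest possible set of links costs $c_A$ times the edge-connectivity of $({\mathcal N},{\mathcal E}_1)$; hence $A$ attacks iff that edge-connectivity is at most $k_A^R=\floor{\tau_R/c_A}$, removing exactly one min-cut worth of links — and when it does attack, it removes a minimum cut, which for a $k$-edge-connected graph is $k$ links; the $U_A$ entries $0$ and $\tau_R-c_A$ in the table then come from whether $D$ made ${\mathcal E}_1$ $(k_A^R+1)$-edge-connected (so $A$ is deterred, $|{\mathcal E}_A|=0$) or only, say, $1$-edge-connected so $A$ cuts one bridge ($|{\mathcal E}_A|=1$, payoff $\tau_R-c_A$).

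\emph{Stage 1 (design).} This is where the Harary network enters and where the table's entries for $|{\mathcal E}_1|$ get pinned down. I would argue: if $D$ wants to deter $A$ entirely (force $|{\mathcal E}_A|=0$), $D$ must build a network resistant to $k_A^R$ link removals, and by the definition of the Harary network the cheapest such network uses $\ceil{n(k_A^R+1)/2}$ links when $k_A^R>0$, or a spanning tree with $n-1$ links when $k_A^R=0$; this gives the Situation~1 rows. If instead $D$ concedes the attack but still builds a connected network that it will heal, the cheapest connected network is a spanning tree ($n-1$ links), $A$ cuts one bridge ($|{\mathcal E}_A|=1$), and $D$ repairs with one link ($|{\mathcal E}_2|=1$), giving $U_D=\tau+(1-\tau-\tau_R)-(n-1)c_D-c_D = 1-\tau_R-nc_D$ and $U_A=\tau_R-c_A$; this is the Situation~2 row. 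Finally, if $D$ builds nothing initially (Situation~4), $A$ has no incentive to act ($|{\mathcal E}_A|=0$), $D$ heals with a spanning tree of $n-1$ links at fraction $\tau+\tau_R$ of the horizon, so $U_D=(1-\tau-\tau_R)-(n-1)c_D$ and $U_A=\tau+\tau_R$. Within each row I would also have to check that no \emph{intermediate} choice of $|{\mathcal E}_1|$ does better — e.g. that building a $j$-edge-connected network with $1<j<k_A^R+1$ is dominated, which follows because $A$ still attacks (removing $j\le k_A^R$ links is profitable for $A$), $D$ still pays to heal, and $D$ paid strictly more than the spanning tree for ${\mathcal E}_1$ with no extra benefit.

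The main obstacle I anticipate is the Stage~1 optimality argument showing that these are the \emph{only} candidate configurations and that the listed link counts are exactly optimal — in particular justifying that when $D$ chooses to be attacked, a minimum spanning tree (rather than some cleverer partially redundant topology) is $D$'s best ${\mathcal E}_1$, and symmetrically that Harary's bound $\ceil{n(k_A^R+1)/2}$ is genuinely tight and attainable with the \emph{connectivity-after-attack} property $A$ must overcome. I would handle the first point by noting that any ${\mathcal E}_1$ that is not $(k_A^R+1)$-edge-connected will be cut by $A$ down to a disconnected graph regardless of how many links it has, after which $D$'s healing cost and the length of the disconnection window $\tau_R$ are unaffected by the excess links in ${\mathcal E}_1$; hence among all such ${\mathcal E}_1$ the spanning tree minimizes $c_D|{\mathcal E}_1|$ and is optimal. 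The second point is precisely the content of the Harary network definition already quoted, so I can invoke it directly: $\ceil{n(k_A^R+1)/2}$ links suffice and are necessary for resistance to $k_A^R$ removals, with the constructive cyclic description guaranteeing feasibility for all $n$ (including odd $n$). Assembling these pieces row by row yields Table~\ref{tab:SPE}, and Proposition~\ref{prop:1} then follows by comparing the $U_D$ values across rows under the stated parameter conditions, exactly as the remark after the lemma indicates.
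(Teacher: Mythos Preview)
Your proposal is correct and follows essentially the same approach as the paper: backward induction through the three stages, eliminating Situations~3 and~5 via the inequality $1-\tau-\tau_R>(n-1)c_D$ (which forces $D$ to heal), then characterizing the optimal link counts in Situations~1, 2, and~4 by invoking the Harary bound for the deterrence case and the spanning-tree lower bound elsewhere. The paper's proof is terser---it states the bounds $|{\mathcal E}_1|\ge n-1$, $|{\mathcal E}_A|\ge 1$ in Situation~2 and asserts ``we obtain the result'' without spelling out why no intermediate edge-connectivity $1<j\le k_A^R$ helps $D$---whereas you explicitly close that gap by observing that such a network is still cut by $A$ at no change to the $\tau_R$ window or healing cost, so $D$ strictly overpays relative to a tree; this is a useful clarification but not a different argument.
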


\begin{proof}
First note that any connected network contains at least $n-1$ links. Conversely, any set of nodes can be made connected by using exactly $n-1$ links (any spanning tree is a solution). We consider a situation where $\mathbb{1}_{E_1\backslash E_A}=0$. Then, either $D$ decides not to heal the network and receives a utility of $U^* = \tau \mathbb{1}_{E_1} -c_D |{\mathcal E}_1|$, or it decides to heal it (by using at most $n-1$ links) and receives a utility of at least $\overline{U} = (1-\tau-\tau_R) +  \tau \mathbb{1}_{E_1} -c_D (|{\mathcal E}_1| + n-1)$. The difference is $\overline{U}-U^* =  (1-\tau-\tau_R) -c_D (n-1) > 0$. Thus, $D$ always prefers to heal the network after the attack of $A$. Therefore, Situations $3$ and $5$ contain no SPE. 

Next we consider Situation $4$. Since $\mathbb{1}_{E_1\backslash E_A\cup E_2}=1$, then $D$ needs to create in total at least $n-1$ links: $|{\mathcal E}_1|+|{\mathcal E}_2| \geq n-1$. Therefore, an optimal strategy is ${\mathcal E}_1 = \emptyset$ and $|{\mathcal E}_2| = n-1$. Since ${\mathcal E}_1 = \emptyset$, the optimal strategy of $A$ is ${\mathcal E}_A = \emptyset$.

In Situation $2$, $({\mathcal N}, {\mathcal E}_1)$ is connected, and thus $|{\mathcal E}_1| \geq n-1$. Further, $\mathbb{1}_{E_1}=1$ and $\mathbb{1}_{E_1\backslash E_A} = 0$, and thus $|{\mathcal E}_A| \geq 1$. 
Since $1-\tau-\tau_R >(n-1)c_D$, then $A$ should remove the minimum number of links to disconnect the network, and we obtain the result.

Finally, in Situation $1$, since $\mathbb{1}_{E_1\backslash E_A}=1$, then $D$ does not need to create any link during the healing phase: ${\mathcal E}_2= \emptyset$. Since $1-\tau-\tau_R > (n-1)c_D$, then $A$ attacks at most $k_A^R$ links if and only if it obtains a nonnegative reward, i.e., $k_A^R$ is the largest integer such that $\tau_R-c_Ak_A^R \geq 0$ which yields $k_A^R = \floor*{\frac{\tau_R}{c_A}}$. Thus, $D$ designs a network that is resistant to an attack compromising up to $k_A^R$ links. Such solution network is the ($|{\mathcal N}|,k_A^R+1$)-Harary network \cite{harary}.
\end{proof}

\textit{Examples:} For clarify, we depict the strategies of $D$ and $A$ at various SPEs using examples shown in Fig. \ref{SPE_ex1}. The network contains 5 nodes. Depending on the relationship between parameters shown in Proposition \ref{prop:1}, the game admits various SPEs. Four possible SPEs with specific actions taken by $D$ and $A$ are presented. For example, when the SPE lies in Situation 1 with $k_A^R = 3$, then at least 10 links are necessary for the network being resistant to 3 attacks. Therefore, $D$ creates a 4-Harary network initially in which each node has at least a degree of 4. In comparison, when $k_A^R = 0$ and the SPE is in Situation 1, then creating a connected tree network is sufficient for $D$ since $A$ is not capable to compromise any link. The SPEs corresponding to Situations 2 and 4 are shown in Figs. \ref{SPE_ex1_3} and \ref{SPE_ex1_4}, respectively.

\begin{figure}[!t]
  \centering
  \subfigure[Situation 1 and $k_A^R = 3$]{
    \includegraphics[width=0.9\columnwidth]{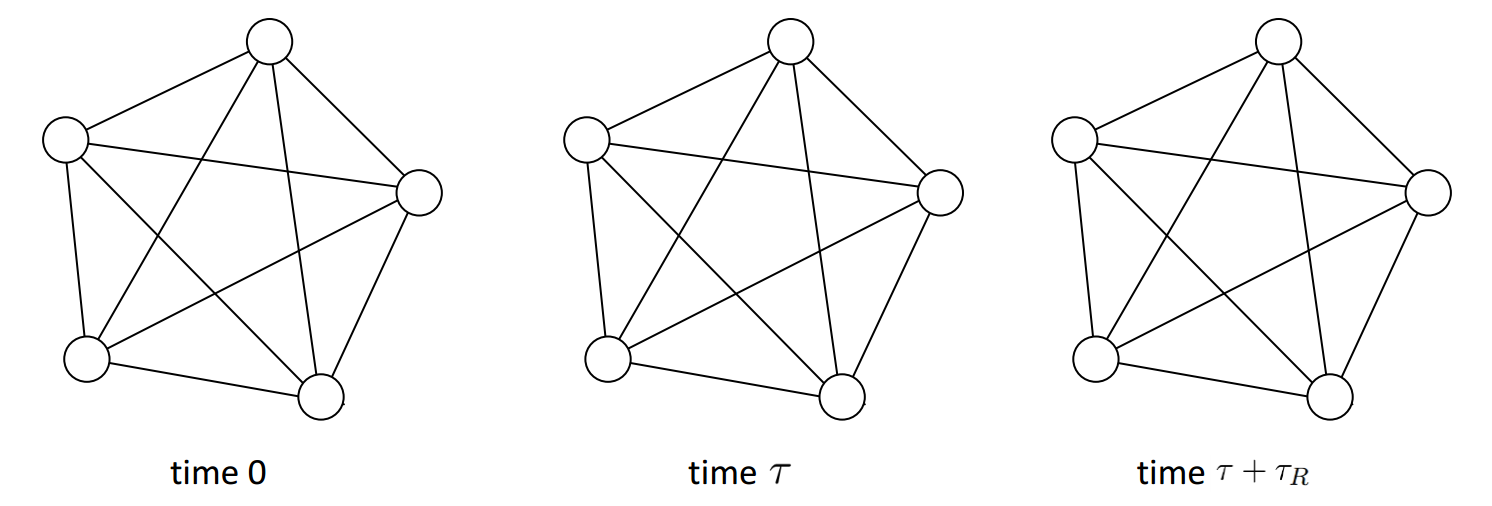}\label{SPE_ex1_1}}
	 \subfigure[Situation 1 and $k_A^R = 0$]{
    \includegraphics[width=0.9\columnwidth]{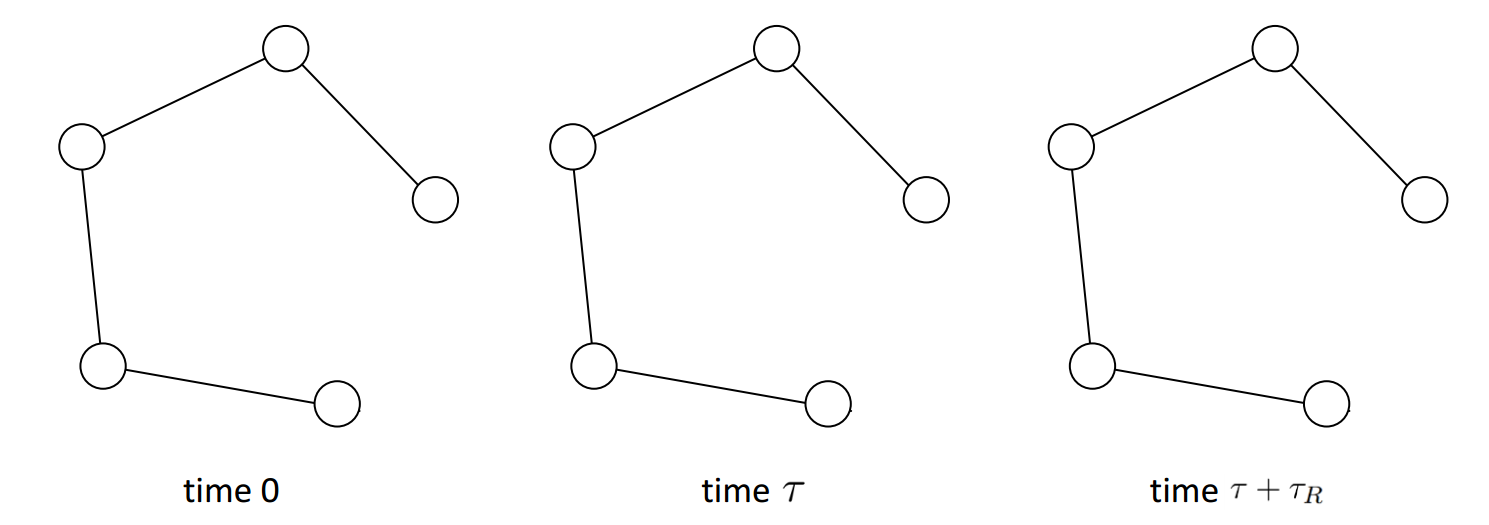}\label{SPE_ex1_2}}  
    	 \subfigure[Situation 2]{
    \includegraphics[width=0.9\columnwidth]{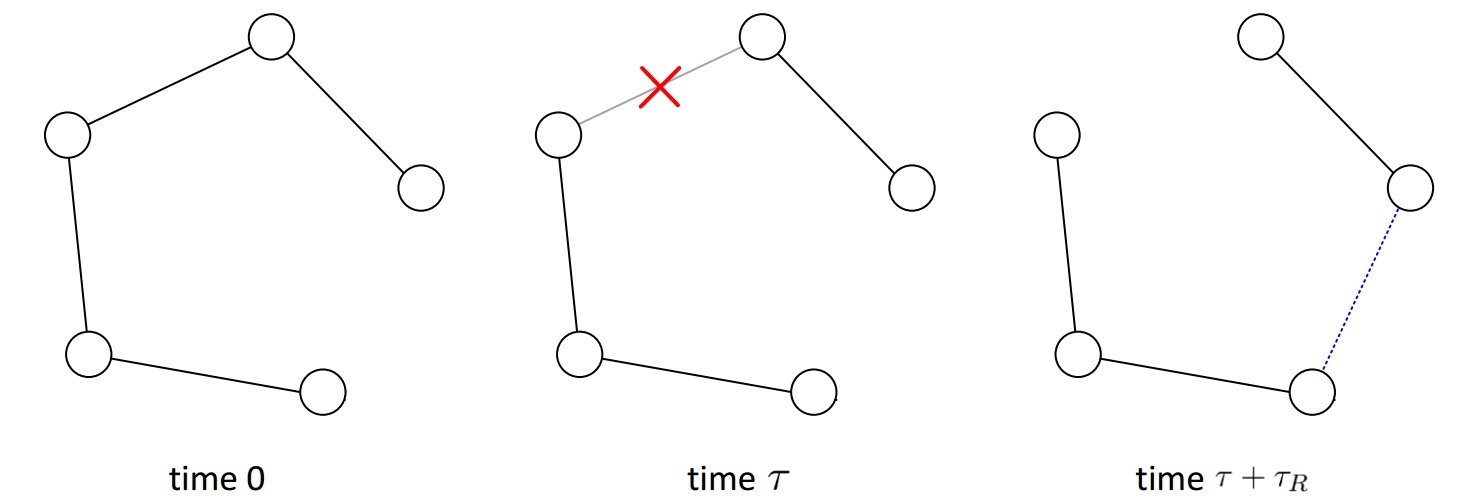}\label{SPE_ex1_3}}
    	 \subfigure[Situation 4]{
    \includegraphics[width=0.9\columnwidth]{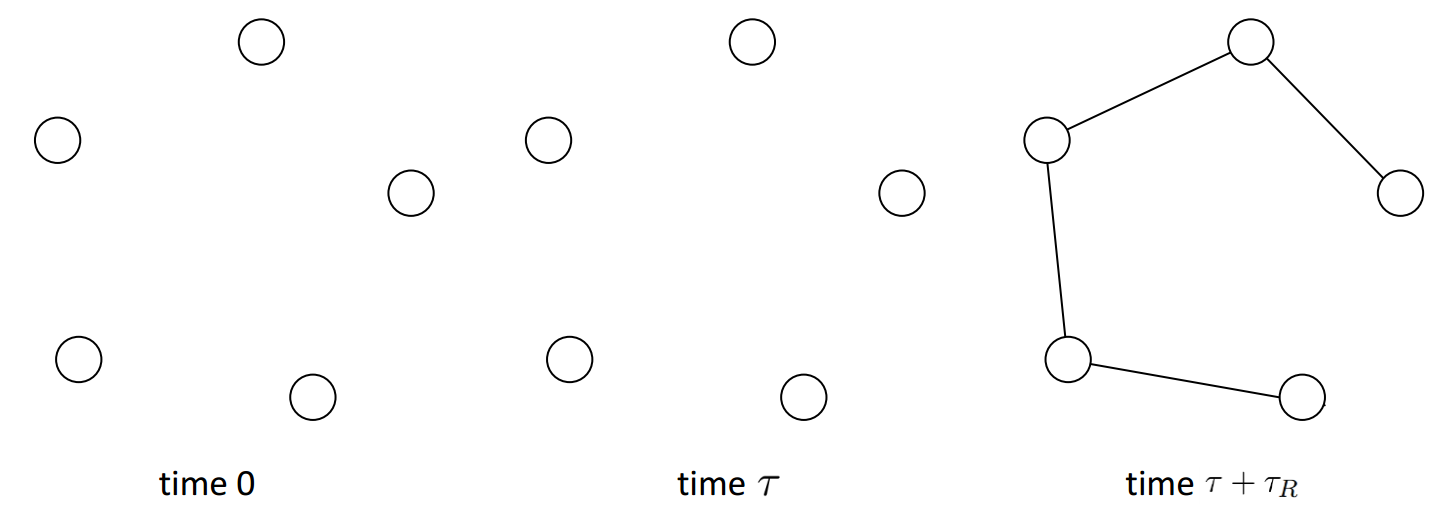}\label{SPE_ex1_4}}     
  \caption[]{Strategies of $D$ and $A$ at different SPEs in regime 1. The network contains 5 nodes. In (a), the SPE is in Situation 1 and $k_A^R = 3$. Thus, at least 10 links are necessary for it being resistant to 3 attacks. In (b), when $k_A^R = 0$ and the SPE lies in Situation 1, a tree network is created by the defender following no actions of $A$ and $D$. In (c), the SPE is in Situation 2, and $A$ will compromise any one link at time $\tau$ and $D$ will heal one link to reconnect the network. In (d), $D$ will not protect the network at time 0 but will connect the network at time $\tau+\tau_R$ which shows SPE in Situation 4.}
  \label{SPE_ex1}
\end{figure}

Based on Lemma \ref{lem:case1}, the stragies of two players at SPE in regime 1 are summarized as follows. Under Situation 1 and $k_A^R>0$, $A$ does not attack and $D$ creates a connected ($|{\mathcal N}|,k_A^R+1$)-Harary network at phase 0. Under Situation 1 and $k_A^R=0$, $D$ simply creates a connected network with the minimum number  of $n-1$ links which can be achieved by any tree-structured network, and $A$ admits a null strategy. In Situation 2,  $D$ initially constructs a tree network using $n-1$ links, and $A$ attacks any one link at phase $\tau$ followed by $D$ recovering the network at phase $\tau+\tau_R$. Finally, for Situation 4, $A$ does not attack, and $D$ constructs a connected tree network only at phase $\tau+\tau_R$.

\subsection{Regime 2: $1-\tau-\tau_R < (n-1) c_D$}
\label{sec:2}

We now consider the case where $D$ has an incentive, at phase $\tau+\tau_R$, to heal the network if at most $k$ links are required to reconnect it, where $k<n-1$ and 
\begin{equation}
k \equiv \floor*{\frac{1-\tau-\tau_R}{c_D}}.
\end{equation}

We sequentially study the potential SPE in Situations $3$, $4$ and $5$ in Lemma~\ref{lem:penible}, Situation $2$ in Lemma~\ref{lem:sit2}, and Situation $1$ in Lemma~\ref{lem:sit1}.

\begin{lemma}\label{lem:penible}
If $1-\tau-\tau_R < (n-1) c_D$, we have the following results:
\begin{itemize}
\item[(i)] Any SPE in Situation $3$ satisfies ${\mathcal E}_2 = \emptyset$, $|{\mathcal E}_A| = k+1$ and $|{\mathcal E}_1| = n-1$,   leading to utilities $U_D = \tau - (n-1)c_D$ and $U_A = 1-\tau-(k+1)c_A$ (occurs only if $\floor{\frac{1-\tau}{c_A}}\geq k+1$);
\item[(ii)] There exists no SPE in Situation $4$;
\item[(iii)] The only potential SPE in Situation $5$ is the null strategy: ${\mathcal E}_1 = {\mathcal E}_2 = {\mathcal E}_A = \emptyset$, leading to utilities $U_D = 0$ and $U_A = 1$.
\end{itemize}
\end{lemma}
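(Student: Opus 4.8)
The plan is to handle the three situations separately, in increasing order of difficulty, using two recurring facts: in the regime $1-\tau-\tau_R<(n-1)c_D$, player $D$ never gains from healing a graph that needs more than $k$ new links, and $D$ can always fall back on the ``do nothing'' continuation ${\mathcal E}_1={\mathcal E}_2=\emptyset$, which guarantees $U_D\ge 0$.

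\emph{Situations 4 and 5 (the easy cases).} In Situation~5, $\mathbb{1}_{E_1}=\mathbb{1}_{E_1\backslash E_A\cup E_2}=0$, so $U_D=-c_D(|{\mathcal E}_1|+|{\mathcal E}_2|)\le 0$, with equality iff ${\mathcal E}_1={\mathcal E}_2=\emptyset$; since choosing ${\mathcal E}_1=\emptyset$ at stage~1 and declining to heal at stage~3 yields exactly $U_D=0$ (reconnecting would cost at least $(n-1)c_D>1-\tau-\tau_R$), any SPE in Situation~5 must have ${\mathcal E}_1={\mathcal E}_2=\emptyset$; then $A$ has no link to remove and $c_A>0$, so ${\mathcal E}_A=\emptyset$, giving $U_D=0$ and $U_A=(1-\tau-\tau_R)+\tau+\tau_R=1$. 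In Situation~4, $\mathbb{1}_{E_1\backslash E_A\cup E_2}=1$ forces $|{\mathcal E}_1|+|{\mathcal E}_2|\ge|{\mathcal E}_1\backslash {\mathcal E}_A\cup {\mathcal E}_2|\ge n-1$, whence $U_D=(1-\tau-\tau_R)-c_D(|{\mathcal E}_1|+|{\mathcal E}_2|)\le(1-\tau-\tau_R)-(n-1)c_D<0$; but the same ${\mathcal E}_1=\emptyset$ deviation secures $U_D=0$, a contradiction, so no SPE lies in Situation~4.

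\emph{Situation 3.} First, since $\mathbb{1}_{E_1\backslash E_A\cup E_2}=0$, deleting any link of ${\mathcal E}_2$ leaves the graph disconnected and saves cost, so at an SPE ${\mathcal E}_2=\emptyset$ and $U_D=\tau-c_D|{\mathcal E}_1|$. Next I would show ${\mathcal E}_1$ is a spanning tree. As $({\mathcal N},{\mathcal E}_1)$ is connected, $|{\mathcal E}_1|\ge n-1$; if $|{\mathcal E}_1|\ge n$, let $D$ deviate at stage~1 to any spanning tree $T\subseteq{\mathcal E}_1$ of $({\mathcal N},{\mathcal E}_1)$. Whatever $A$ then plays, $D$ can refuse to heal at stage~3 and collect at least $\tau\,\mathbb{1}_{E_1}-c_D(n-1)=\tau-(n-1)c_D>\tau-c_D|{\mathcal E}_1|$, contradicting optimality of ${\mathcal E}_1$; hence $|{\mathcal E}_1|=n-1$. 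With ${\mathcal E}_1$ a tree (and $k+1\le n-1$ since $k<n-1$), removing $|{\mathcal E}_A|$ links produces exactly $|{\mathcal E}_A|+1$ components, so reconnection needs exactly $|{\mathcal E}_A|$ links; for $D$'s stage-3 best reply to be ``not heal'' — which Situation~3 requires — we need $|{\mathcal E}_A|\ge k+1$, and since $A$'s Situation-3 payoff is $1-\tau-c_A|{\mathcal E}_A|$, strictly decreasing in $|{\mathcal E}_A|$, we obtain $|{\mathcal E}_A|=k+1$, $U_A=1-\tau-(k+1)c_A$, $U_D=\tau-(n-1)c_D$. Finally, for this to be an SPE, $A$ must not prefer ${\mathcal E}_A=\emptyset$ (which leaves the tree connected and yields $0$); by the stated tie-breaking rule ($A$ favouring more removals when indifferent) this is equivalent to $1-\tau-(k+1)c_A\ge 0$, i.e.\ $\floor*{\frac{1-\tau}{c_A}}\ge k+1$.

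\textbf{Anticipated main obstacle.} Situations~4 and~5 and the identity $U_D=\tau-c_D|{\mathcal E}_1|$ are routine. The delicate step is Situation~3: establishing $|{\mathcal E}_1|=n-1$ via the spanning-tree deviation — the crux being that $D$'s option of not healing provides a lower bound $\tau-(n-1)c_D$ on $U_D$ that is independent of $A$'s reaction — and then combining the tree structure, the ``$D$ heals iff at most $k$ links are needed'' threshold, and the strict monotonicity of $A$'s payoff to pin $|{\mathcal E}_A|$ down to exactly $k+1$, all while invoking the tie-breaking convention carefully enough to land on the precise feasibility inequality $\floor*{\frac{1-\tau}{c_A}}\ge k+1$.
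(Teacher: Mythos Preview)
Your proposal is correct and follows essentially the same approach as the paper: handle Situations~5 and~4 by comparing $U_D$ against the null strategy, then in Situation~3 argue that $D$ plays a tree and $A$ removes just enough links so that $D$ declines to heal. Your treatment of Situation~3 is in fact more careful than the paper's own proof---you explicitly justify $|{\mathcal E}_1|=n-1$ via the spanning-tree deviation and pin down $|{\mathcal E}_A|=k+1$ from the tree structure and the healing threshold, whereas the paper simply asserts that a tree achieves $\tau-(n-1)c_D$ and that ``we further can show'' $A$ attacks $k+1$ links.
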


\begin{proof}
Suppose that an SPE occurs in Situation $5$. Since the network is always disconnected, then $U_D = - c_D (|{\mathcal E}_1|+|{\mathcal E}_2|)$. The maximum utility is obtained when ${\mathcal E}_1 = {\mathcal E}_2 = \emptyset$. Thus, ${\mathcal E}_A= \emptyset$.

In Situation $4$, since any connected network contains at least $n-1$ links, then the maximum utility of $D$ is  $U_D({\mathcal E}_1, {\mathcal E}_2, {\mathcal E}_A) =  (1-\tau-\tau_R) - c_D (n-1)<0$. Thus, $D$ is better off with a null strategy (occurring in Situation $5$).

In Situation $3$, since $\mathbb{1}_{E_1} = 1$ then $|{\mathcal E}_1| \geq n-1$. $D$ can achieves utility value $\tau - (n-1) c_D$ by playing a tree network. Since $\mathbb{1}_{E_1\backslash E_A} \neq \mathbb{1}_{E_1}$ then  $|{\mathcal E}_A| \geq 1$ and $U_A \leq 1-\tau - c_A$. The bound is achieved by attacking any one link created by $D$. We further can show that $A$ needs to attack $k+1$ links such that $D$ will not heal the network. 
\end{proof}

\textit{Example:} In regime 2, for SPEs in Situation 5, the network remains empty since $D$ does not protect nor heal. An illustration of SPE in Situation 3 with $k = 1$ is depicted in Fig. \ref{SPE_ex2_2}. Specifically, $D$ creates a connected network with tree structure initially. Then, $A$ compromises any $k+1=2$ links to disconnect the network. Since $D$ is willing to recover at most $k=1$ link, $D$ does not heal the network at time $\tau+\tau_R$.

\begin{figure}[!t]
\begin{centering}
\includegraphics[width=0.9\columnwidth]{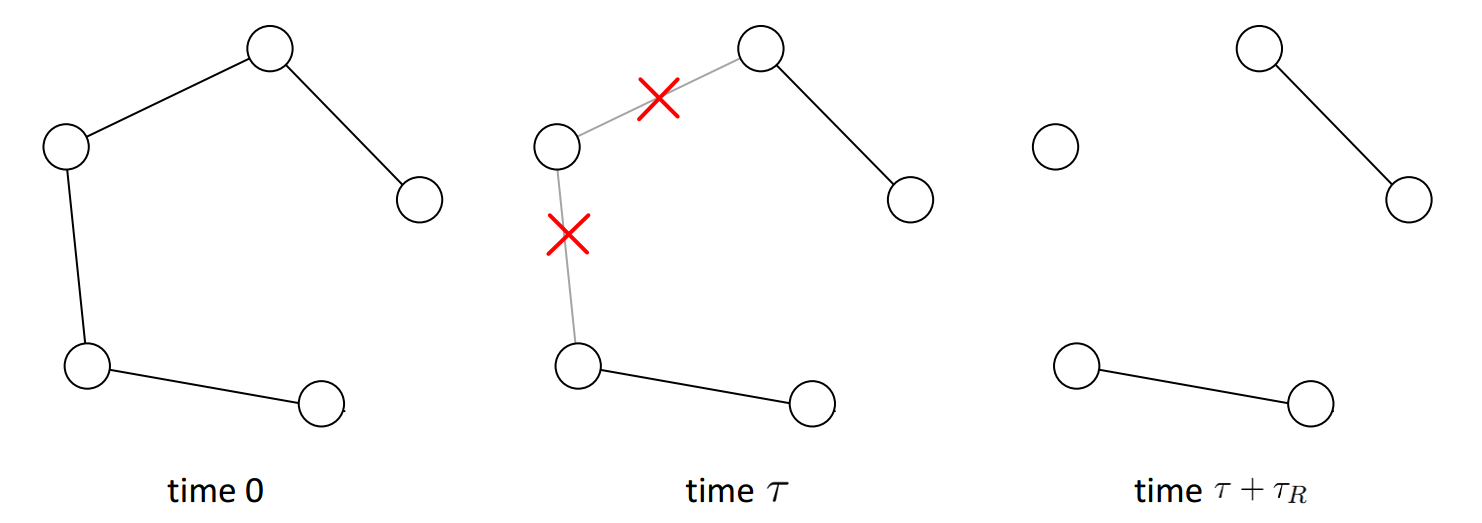} 
\par\end{centering}
\caption{\label{SPE_ex2_2}
The SPE lies in Situation 3 with $k=1$. Thus, $D$ will only create a tree network followed by $A$ compromising any 2 links to disconnect the network, and $D$ does not recover at time $\tau+\tau_R$.}
\end{figure}

In the following, we focus on the SPEs in Situations $1$ and $2$. In both cases, $\mathbb{1}_{E_1}=1$. Thus, $D$ creates a connected network initially. For each node $i \in {\mathcal N}$, let $d_i$ be its degree.  To facilitate the analysis, we focus on the potential best response strategies of $A$ to $E_1$ which are summarized in the following three distinct cases:
\begin{enumerate}[label=(\roman*)]
\item $A$ does not attack and obtains a utility of $U_A^{(1)} = 0$;
\item $A$ attacks sufficiently many links so that the network admits $2$ components, i.e., $A$ attacks exactly $\min_{1 \leq i \leq n}d_i$ links to disconnect a node of minimal degree. Then, $D$ heals the network by constructing $1$ link, and $A$ receives utility 
\begin{equation}U_A^{(2)} = \tau_R- (\min_{1 \leq i \leq n}d_i) c_A.\label{eq:uii}
\end{equation}
\item $A$ attacks sufficiently many links so that the network admits $\ell+2$ components, for some sufficiently large $\ell$ (whose exact value is discussed in the following two lemmas). Then, $D$ does not heal the network, and $A$ receives utility
\begin{equation}
U_A^{(3)} = 1-\tau- |{\mathcal E}_A| c_A.\label{eq:uiii}
\end{equation}
 Note that any intermediate value of components in the range $\llbracket 2 ; \ell+2 \rrbracket$ cannot happen at SPE since it amounts to a lower utility for $A$. The current case (iii) belongs to Situation 3 which eases the analysis in Lemmas \ref{lem:sit2} and \ref{lem:sit1}.
\end{enumerate}

The next lemma characterizes the SPEs for Situation 2.

\begin{lemma}\label{lem:sit2}
The only SPEs in Situation $2$ are such that $|{\mathcal E}_1|=n-1$, $|{\mathcal E}_A|=1$, $|{\mathcal E}_2|=1$,  $U_D = 1-\tau_R-n c_D$, and $U_A=\tau_R-c_A $. 
Furthermore, it occurs only if $c_A \leq \tau_R$ and 
$\floor*{\frac{1-\tau-\tau_R}{c_D}} > \floor*{\frac{1-\tau-\tau_R}{c_A}}$.
\end{lemma}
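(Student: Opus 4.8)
The plan is to combine the structural restrictions that Situation~2 places on $(\mathcal{E}_1,\mathcal{E}_A,\mathcal{E}_2)$ with the trichotomy of attacker best responses (cases (i)--(iii) stated just before the lemma), and then read off both the equilibrium cardinalities and the two parameter conditions from elementary payoff comparisons. There is nothing deep in the utilities themselves; the work is in showing the designer's first move is forced to be a spanning tree and in translating the attacker's indifference thresholds into the stated floor inequality.

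First I would extract the structure. If $(\mathcal{E}_1,\mathcal{E}_A,\mathcal{E}_2)$ is an SPE in Situation~2, then $\mathbb{1}_{E_1}=1$ gives $|\mathcal{E}_1|\ge n-1$, $\mathbb{1}_{E_1\backslash E_A}=0$ gives $|\mathcal{E}_A|\ge1$, and since $\mathbb{1}_{E_1\backslash E_A}=0$ while $\mathbb{1}_{E_1\backslash E_A\cup E_2}=1$ we need $|\mathcal{E}_2|\ge1$. Substituting the Situation~2 indicator values into $U_D$ yields $U_D=(1-\tau-\tau_R)+\tau-c_D(|\mathcal{E}_1|+|\mathcal{E}_2|)=1-\tau_R-c_D(|\mathcal{E}_1|+|\mathcal{E}_2|)$, so inside Situation~2 the designer simply wants to build as few links as possible. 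By the trichotomy, $A$'s best response to a connected $\mathcal{E}_1$ is one of (i)--(iii); case (i) leaves all three indicators equal to $1$ (Situation~1) and case (iii) leaves the network disconnected with $\mathcal{E}_2=\emptyset$ (Situation~3), so an SPE in Situation~2 forces $A$ into case (ii), i.e.\ $A$ isolates a minimum-degree node and $D$ re-links with one edge: $|\mathcal{E}_2|=1$ and $|\mathcal{E}_A|=\min_i d_i(\mathcal{E}_1)$. Then $U_D=1-\tau_R-c_D(|\mathcal{E}_1|+1)$ is maximized by taking $\mathcal{E}_1$ to be a spanning tree, for which $|\mathcal{E}_1|=n-1$ and $\min_i d_i=1$; hence $|\mathcal{E}_A|=1$, $U_D=1-\tau_R-nc_D$ and $U_A=\tau_R-c_A$. (Among the ``$A$ removes $m$, $D$ re-links $m$'' outcomes still in Situation~2, $U_A=\tau_R-mc_A$ decreases in $m$, confirming $m=1$.)

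Second I would pin down when this is genuinely an equilibrium by checking that case (ii) is $A$'s best response to the spanning tree $T$. Against $T$, case (i) gives $0$ and case (ii) gives $\tau_R-c_A$; since the tie-breaking rule makes $A$ attack when indifferent, this requires $c_A\le\tau_R$. The cheapest way for $A$ to deter healing (case (iii)) against $T$ is to remove $k+1$ edges, forcing $D$ to re-link $k+1>k=\floor*{\frac{1-\tau-\tau_R}{c_D}}$ edges, which by Lemma~\ref{lem:penible}(i) yields $1-\tau-(k+1)c_A$; since the tie-break now favours the larger attack, case (ii) wins only if $\tau_R-c_A>1-\tau-(k+1)c_A$, i.e.\ $kc_A>1-\tau-\tau_R$. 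Because $k$ is an integer, $k>\frac{1-\tau-\tau_R}{c_A}$ is equivalent to $k>\floor*{\frac{1-\tau-\tau_R}{c_A}}$, which is exactly $\floor*{\frac{1-\tau-\tau_R}{c_D}}>\floor*{\frac{1-\tau-\tau_R}{c_A}}$; this also forces $k\ge1$, so that $D$ indeed heals after the single-link attack and the argument is self-consistent.

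The step I expect to be the main obstacle is the claim that the designer must \emph{actually} play a spanning tree, rather than merely that a spanning tree is the cheapest Situation~2 configuration: in principle $D$ might over-provision $\mathcal{E}_1$ in order to push $A$ off case (iii) while still remaining in Situation~2. Ruling this out needs a deviation argument — from an over-built connected $\mathcal{E}_1^*$, shedding non-bridge edges either keeps $A$ in case (ii) (strictly raising $U_D$ by the monotonicity above) or collapses the continuation into Situation~1 or Situation~3, whose payoffs must then be compared against $1-\tau_R-c_D(|\mathcal{E}_1^*|+1)$. Carrying out that comparison, together with carefully honouring the tie-breaking conventions and the floor/ceiling bookkeeping in the previous paragraph, is where essentially all the technical weight of the proof lies.
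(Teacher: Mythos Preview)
Your approach is essentially the paper's: extract the indicator constraints, reduce $U_D$ to $1-\tau_R-c_D(|\mathcal{E}_1|+|\mathcal{E}_2|)$, invoke the case-(i)/(ii)/(iii) trichotomy for $A$, and then compare $U_A^{(2)}$ with $U_A^{(1)}$ and $U_A^{(3)}$ to read off $c_A\le\tau_R$ and the floor inequality. The tree/one-attack/one-heal outcome and the utilities are derived exactly the same way.

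Where the paper differs is precisely on the step you flag as the obstacle. Rather than first committing to the spanning tree and then arguing by deviation that over-provisioning cannot help, the paper first bounds the total build by comparing Situation~2 against Situation~3: any SPE in Situation~2 must satisfy $|\mathcal{E}_1|+|\mathcal{E}_2|\le k+(n-1)$, so one can write $|\mathcal{E}_1|=n-1+\alpha$ with $\alpha<k$ and hence $|\mathcal{E}_2|\le k-\alpha$. With this parametrization in hand, the paper exhibits, for \emph{any} such $\mathcal{E}_1$, an explicit attack when $k<k_A:=\ceil*{\frac{1-\tau-\tau_R}{c_A}}$: remove $\alpha$ edges to reduce to a spanning tree, then remove $k+1-\alpha$ more, producing $k-\alpha+2$ components with only $k+1<k_A+\min_i d_i$ removals; since $D$ can re-link at most $k-\alpha$ edges, healing fails and the continuation is Situation~3, contradicting Situation~2. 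This uniformly rules out over-built $\mathcal{E}_1$ in one stroke, and the converse direction (when $k\ge k_A$) is handled by checking a path (line) network, where creating $k+2$ components forces $A$ to remove $k+1$ edges, which is no longer worthwhile. Your payoff comparisons for the tree already coincide with this last step; adopting the $\alpha$-parametrization would let you replace the open-ended ``shed non-bridge edges and compare'' program with the paper's short counting argument.
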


\begin{proof}
At an SPE in Situation $2$, the utility of $D$ is of the form $1-\tau_R - c_D (|{\mathcal E}_1| + |{\mathcal E}_2|)$. Then, it is a best strategy for $D$ to heal the network at time $\tau+\tau_R$, i.e., $1-\tau_R - (|{\mathcal E}_1| + |{\mathcal E}_2|) c_D \geq \tau -|{\mathcal E}_1| c_D.$
Thus, $|{\mathcal E}_2| \leq \floor*{\frac{1-\tau - \tau_R}{c_D}} = k$, and $k$ is the maximum number of links that $D$ can create at time $\tau+\tau_R$ at an SPE. In addition, at this SPE, $D$ receives a higher reward than by using its best strategy in Situation $3$, i.e., $1-\tau_R - (|{\mathcal E}_1| + |{\mathcal E}_2|) c_D \geq \tau-(n-1) c_D.$
Thus, $|{\mathcal E}_1| + |{\mathcal E}_2| \leq \floor*{\frac{1-\tau - \tau_R}{c_D}} +(n-1)=k+(n-1).$ Since $k < n-1$, then altogether $D$ can create at most $|{\mathcal E}_1| + |{\mathcal E}_2| \leq 2(n-1)$ links.

For any SPE in Situation $2$, note that $|{\mathcal E}_1| \geq n-1$. Thus, we can write $|{\mathcal E}_1| = n-1+\alpha$ and $|{\mathcal E}_2| \leq k - \alpha$, for some $\alpha < k$. 
For Situation $2$, we obtain $U_A^{(2)} \geq U_A^{(1)}$ which yields $(\min_{1 \leq i \leq n}d_i) \leq \floor*{\frac{\tau_R}{c_A}}$. If $\tau_R < c_A$, then no SPE exists in Situation $2$. 
Further, based on 
$0 \leq U_A^{(2)} - U_A^{(3)} = (|{\mathcal E}_A| - (\min_{1 \leq i \leq n}d_i))  c_A - (1-\tau-\tau_R)$, we obtain
$|{\mathcal E}_A| \geq \ceil*{\frac{1-\tau-\tau_R}{c_A}} + (\min_{1 \leq i \leq n}d_i)$.
Since at $\tau+\tau_R$, $D$ can create at most $k-\alpha$ links, then the goal of $A$ in case (iii) is to create at least $\ell = k-\alpha+2$ components in the network (i.e., to create a $k-\alpha+1$ cut). Hence, $D$ constructs ${\mathcal E}_1$ in a way that at least $k_A + (\min_{1 \leq i \leq n}d_i)$ links need to be removed so that the network consists of $k+2-\alpha$ components, where $k_A:=\ceil*{\frac{1-\tau-\tau_R}{c_A}}$. 

Recall that $k= \floor*{\frac{1-\tau - \tau_R}{c_D}}$ is the maximal number of links that $D$ can recover at phase $\tau+\tau_R$.
Suppose that $k < k_A$ (i.e., $k \leq k_A-1$). Then, for any $E_1$, consider the following attack: first remove $\alpha$ links so that the resulting network is a tree and then remove $k_2+1-\alpha$ links. Then, the resulting network has exactly $n-2-k+\alpha$ links, i.e., it has $n-(n-2-k+\alpha) = k-\alpha+2$ components and is obtained using $k+1 < k_A + (\min_{1 \leq i \leq n}d_i)$ links. Thus, if $k < k_A$, no SPE in Situation $2$ exists.
If $k > k_A+1$ (i.e., $k \geq k_A$), then we consider the strategy that $D$ creates a line network at time $0$. Then to induce $k+2$ components, $A$ needs to remove $k+1$ links. However, due to $k > k_A+1$, it is not of the best interest to $A$. Instead, the best response for $A$ is to attack exactly one link (one being adjacent to one of the nodes with degree $1$). Then, the best strategy for $D$ is to re-create this compromised link at time $\tau+\tau_R$ which is an SPE. It is strategic as it minimizes the number of created links. 
 \end{proof}
 
In Lemma \ref{lem:sit2}, the condition $c_A\leq \tau_R$ ensures that $A$ has an incentive to compromise the network, and the condition $\floor*{\frac{1-\tau-\tau_R}{c_D}} > \floor*{\frac{1-\tau-\tau_R}{c_A}}$ guarantees that $D$ is capable to heal the network after the attack.  Note that when these two conditions are satisfied, all other strategies that $D$ creates a tree network at phase 0 and $A$ attacks one link which is further reconnected by $D$ also constitute SPEs of Situation 2. 

To study the SPE in Situation 1, for convenience, we  denote
$$ k_A^R \equiv \floor*{\frac{\tau_R}{c_A}}\ \mathrm{and}\ k_A^H \equiv \floor*{\frac{1-\tau}{c_A}},$$
where $k_A^R$ (resp. $k_A^H$) corresponds to the maximal number of attacks that $A$ is willing to deploy to disconnect the network during the phase interval $[\tau,\tau+\tau_R]$ (resp. $[\tau,1]$) so that $U_A^{(2)}$ (resp. $U_A^{(3)}$) achieves a positive value. 

The following lemma characterizes the possible SPEs in Situation 1.
\begin{lemma}\label{lem:sit1}
If $\tau_R/c_A > n-1$ or $\floor*{\frac{1-\tau}{c_A}} >  \floor*{\frac{1-\tau}{c_D}}$, then no SPE exists in Situation $1$. 
Otherwise, let
\begin{equation}
\delta = \left\{ 
\begin{array}{ll}
\ceil*{\frac{n \left( k_A^R+1\right)}{2}} & \text{if } k \geq 1 \text{ and } k_A^R  > 1, \\
\ceil*{\frac{n \left( k_A^H+1\right)}{2}} & \text{if } k = 0 \text{ and } k_A^R  > 1, \\
n & \text{if } k_A^H = k+1 \text{ and } k_A^R  = 1, \\
n+\floor*{\frac{n}{k}}+\ceil*{\frac{\floor*{\frac{n}{k}}}{2}} & \text{if } k_A^H \neq k+1 \text{ and } k_A^R  = 1, \\
n-1 & \text{if } k_A^H = k \text{ and } k_A^R  = 0, \\
n & \text{if } k_A^H \neq k \text{ and } k_A^R  = 0. 
\end{array} \right. \label{eq:delta}
\end{equation}
If $1 < \delta c_D$ or if $1-\tau < (\delta-n+1) c_D$, then no SPE in Situation $1$ exists.
Otherwise, the unique SPE is such that $U_D = 1 - \delta c_D$ and $U_A = 0$.
\end{lemma}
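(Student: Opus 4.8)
The plan is to collapse Situation~1 to a single extremal-graph problem. In Situation~1 the network is still connected after $A$'s move, so healing is pointless and ${\mathcal E}_2=\emptyset$, and $\mathbb{1}_{E_1}=1$ forces $({\mathcal N},{\mathcal E}_1)$ connected, so $|{\mathcal E}_1|\ge n-1$ and $U_D=1-|{\mathcal E}_1|c_D$. Thus a Situation-1 SPE exists exactly when there is a connected ${\mathcal E}_1$ against which the null attack is $A$'s unique best reply, and $D$ then picks the sparsest such graph; writing $\delta$ for its edge count, \emph{every} Situation-1 SPE has $U_D=1-\delta c_D$ and $U_A=0$, which is the uniqueness claimed in the statement.

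Second, I translate ``$A$ does not attack'' into structural constraints on ${\mathcal E}_1$ using the three candidate replies (i)--(iii) listed before the lemma (these are exhaustive up to dominated intermediate attacks, as already noted there). Since the tie-breaking rule makes $A$ attack whenever indifferent, replies (ii) and (iii) must both be \emph{strictly} unprofitable. By~\eqref{eq:uii}, deterring (ii) requires that disconnecting $({\mathcal N},{\mathcal E}_1)$ costs more than $\tau_R/c_A$ links; since $D$ always realizes edge-connectivity equal to minimum degree (Harary being edge-optimal for any target connectivity), this is exactly ``$({\mathcal N},{\mathcal E}_1)$ resists any $k_A^R$ link removals''. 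By~\eqref{eq:uiii}, deterring (iii) requires that the least number $m$ of links whose removal yields at least $k+2$ components (the point past which $D$ stops healing) satisfies $m\ge k_A^H+1$. So $\delta$ is the minimum edge count of a connected $n$-node graph that simultaneously (a) resists $k_A^R$ link removals and (b) needs more than $k_A^H$ removals to be broken into $k+2$ pieces.

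Third, the two ``no SPE'' clauses. If $\tau_R/c_A>n-1$ then (a) demands edge-connectivity $\ge n$, impossible in a simple $n$-node graph, so Situation~1 is empty. The clause $\floor*{(1-\tau)/c_A}>\floor*{(1-\tau)/c_D}$ I would obtain by showing (b) forces $|{\mathcal E}_1|$ large enough that $1-|{\mathcal E}_1|c_D$ falls below what $D$ can secure by abandoning the network: the spanning-forest bound $m\le|{\mathcal E}_1|-n+k+2$ lower-bounds the feasible $|{\mathcal E}_1|$, and comparing $1-|{\mathcal E}_1|c_D$ with the Situation-3 value $\tau-(n-1)c_D$ of Lemma~\ref{lem:penible} (available once $k_A^H\ge k+1$) and with the null-strategy value $0$ yields the threshold after sharpening the forest bound using (a). The remaining two conditions are the same outside-option comparisons phrased through the computed $\delta$: $1<\delta c_D$ says $1-\delta c_D<0$ (so $D$ prefers the null strategy), and $1-\tau<(\delta-n+1)c_D$ says $1-\delta c_D<\tau-(n-1)c_D$ (so $D$ prefers Situation~3); in either regime Situation~1 cannot survive.

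The heart of the proof, and the step I expect to dominate the work, is the explicit value of $\delta$ in~\eqref{eq:delta}, a two-constraint extremal problem I would split exactly along the displayed cases according to which constraint binds. When $k_A^R>1$, (a) already forces many edges and the ($|{\mathcal N}|,k_A^R+1$)-Harary network is the minimizer by its defining optimality ($\ceil*{n(k_A^R+1)/2}$ links, edge-connectivity $k_A^R+1$); one then checks it also meets (b) via a component-boundary count ($k+2$ pieces in a $(k_A^R+1)$-edge-connected graph cost at least $\tfrac12(k+2)(k_A^R+1)$ cuts), and the split $k\ge 1$ versus $k=0$ enters because for $k=0$ any disconnection is permanent, so (a) is governed by $k_A^H$ instead and the Harary parameter becomes $k_A^H+1$. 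When $k_A^R\le 1$, (a) only forbids leaves (or nothing), so the candidates are a spanning tree ($n-1$ links, needing $k+1$ cuts for $k+2$ pieces), a Hamiltonian cycle ($n$ links, needing $k+2$ cuts), or a cycle with a controlled number of chords, and the case split follows from matching $k+1$, $k+2$, or the larger count that $\approx\floor*{n/k}$ well-placed chords buy against $k_A^H+1$, producing the $n-1$, $n$, and $n+\floor*{n/k}+\ceil*{\floor*{n/k}/2}$ formulas. In each case the existence half (exhibit the graph, verify (a) and (b)) is routine; the hard half is optimality, which needs the handshake bound for (a) and, for (b), a counting lemma showing that a graph with too few edges always admits a cheap multiway cut, typically by isolating a low-degree vertex or cutting along a long induced path. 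Once $\delta$ is fixed, comparing $1-\delta c_D$ with the payoffs of Situations~2--5 (Lemmas~\ref{lem:penible},~\ref{lem:sit2}) closes the argument.
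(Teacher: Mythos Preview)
Your proposal follows essentially the same route as the paper: reduce Situation~1 to the two-constraint extremal problem (a)+(b), case-split on $(k,k_A^R)$ with the same tree/ring/Harary/chorded-ring constructions, and close by comparing $1-\delta c_D$ against the outside options from Lemmas~\ref{lem:penible}--\ref{lem:sit2}. One technical point the paper makes explicit that you leave implicit: from $k_A^H\le k_D^H$ one first derives the key inequality $k_A^H\le k+k_A^R+1$ (via $\lfloor(1-\tau-\tau_R)/c_A\rfloor\le k$), and this is precisely what lets the $(k_A^R+1)$-Harary network---and your component-boundary count $\tfrac12(k+2)(k_A^R+1)$---actually clear constraint~(b) in the $k_A^R>1$ cases, so you should state and prove it when filling in the details.
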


\begin{proof}
See Appendix \ref{appLemma}.
\end{proof}

\textit{Example:} For clarity, an illustration of SPE in Situation 1 with $\delta = 5$ is depicted in Fig. \ref{SPE_ex2_1}. There are 5 nodes in the network and the parameters are $k_A^H = 2$ and $k_A^R = 1$. Specifically, $D$ creates a 2-Harary network with the ring topology initially. Then, $A$ is not capable to attack. The network remains connected over the entire time period.

\begin{figure}[!t]
\begin{centering}
\includegraphics[width=0.9\columnwidth]{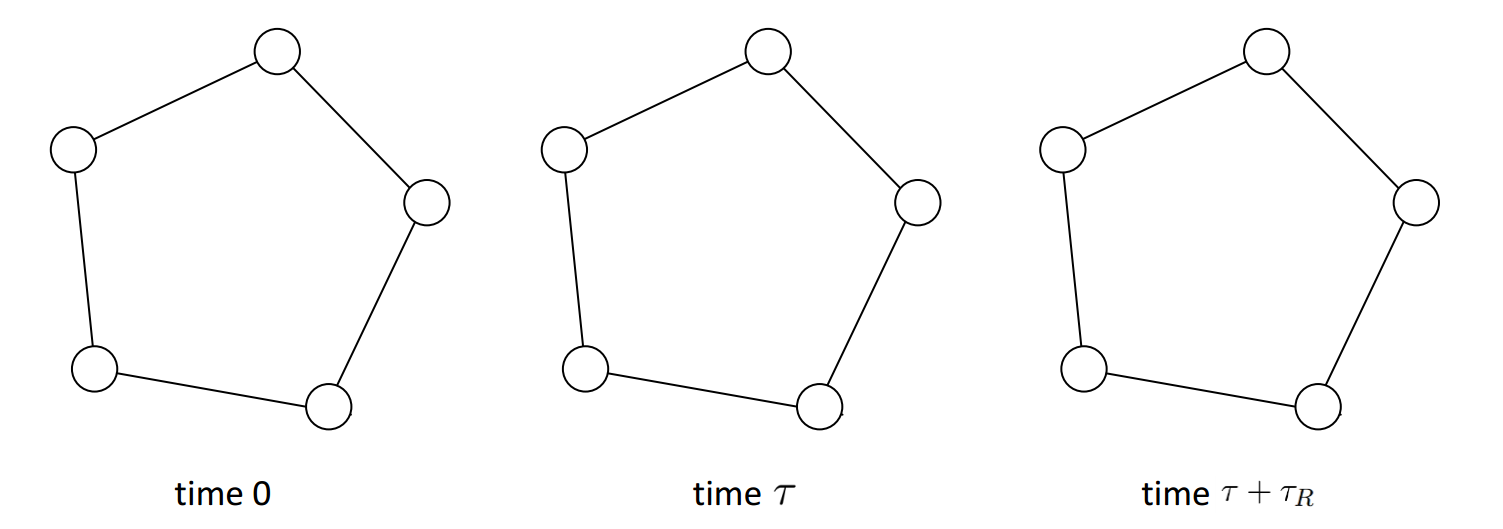} 
\par\end{centering}
\caption{\label{SPE_ex2_1}
The SPE lies in Situation 1 with $\delta=5$ ($k_A^H = 2$ and $k_A^R = 1$). Thus, $D$ creates a 2-Harary network with the ring topology. $A$ will not attack and thus $D$ does not heal the network.}
\end{figure}

For convenience, the results of Lemmas \ref{lem:penible}, \ref{lem:sit2} and \ref{lem:sit1} are summarized in Table~\ref{tab:SPE2}.
\begin{center}
\begin{table}[h]
{\small
$$\hspace{-1em}\begin{array}{|@{}c@{\,}|@{}c@{\;}c@{\;}c@{\,}|@{\,}c@{\;\;}c@{}|}
\text{Situation} & |{\mathcal E}_1| & |{\mathcal E}_A| & |{\mathcal E}_2| & U_D & U_A \\
\hline
1 & \delta & 0 & 0 & 1-c_D \delta & 0\\
2 & n-1 & 1 & 1 & 1- \tau_R-n c_D & \tau_R-c_A\\
3 & n-1 & k+1 & 0 & \tau-(n-1)c_D & 1-\tau-(k+1)c_A\\
5 & 0 & 0 & 0 & 0 & 1\\
\hline
\end{array}$$
}
\hfill{}
\caption{Different potential SPEs when $1-\tau-\tau_R < (n-1)c_D$ (Note: $\delta$ is given by Eq.~\eqref{eq:delta}, and $ k = \floor*{\frac{1-\tau-\tau_R}{c_D}}$).}
\label{tab:SPE2}
\end{table}
\end{center}

We next comment on the strategies of $D$ and $A$ at SPEs. Specifically, the players' strategies in Situation 2 under regime 2 are the same as the corresponding ones under regime 1. In Situation 3, $D$ creates a tree network at time 0 and does not heal it after $A$ compromising any $k+1$ links at phase $\tau$. Depending on the system parameters, in Situation 1, $D$ creates a connected network using $\delta$ links either in a tree, ring or Harary network topology, and $A$ does not attack.

\textit{Remark:} In the previous two Sections \ref{sec:1} and \ref{sec:2}, we have not explicitly determined those SPEs satisfying the boundary conditions. Note that at  boundaries where multiple SPEs could be feasible, the defender playing a leader role will first choose the one that yields the highest utility. Then, after fixing the defender's strategy, the attacker selects the SPE that maximizes its payoff.  

\subsection{Discussions on Constrained Action Set of $A$}
In some scenarios, $A$ may not be capable to attack a particular set of links due to constraints. Thus, some links  initially created by $D$ cannot be compromised by $A$, and they can be regarded as \textit{secure links}. The major SPE analysis of this paper is still valid for this constrained scenario with extra considerations on $A$'s feasible action set. We present the results for this extension in regime 1 briefly as follows, and the results in regime 2 can be obtained using similar arguments.

First, we consider the case that every node can create at least one secure link with other nodes. Then the SPE in Situation 1 under $k_A^R>0$ becomes as $|\mathcal{E}_1| = n-1$, $|\mathcal{E}_A| = 0$, and $|\mathcal{E}_2| = 0$. In this subcase, $D$ can create a connected network with all secure links using a tree topology and thus Harary network, $|\mathcal{E}_1| = \ceil*{\frac{n(k_A^R+1)}{2}}$, is not optimal to $D$.  Furthermore, Situation 2 is not possible as the network created by $D$ cannot be attacked. In addition, Situation 4 remains the same in this case. We next investigate cases in Situation 2. Indeed, SPE in  Situation 2 occurs if there exists at least a single link in the tree network created by $D$ at phase 0 which is insecure. Then, $A$ disconnects the network by compromising this vulnerable link. Finally, we analyze the case when a subset of nodes in the network can form secure links with others. In this scenario, the results of Situation 1 $\&\ k_A^R=0$, Situation 2, and Situation 4 in Table \ref{tab:SPE} still hold. For Situation 1 $\&\ k_A^R>0$, $D$ does not need to create a Harary network at phase 0 as some created links are secure. To this end, we can leverage network contraction \cite{chen2019} to derive the SPE. Network contraction refers to the principle that if there is a secure link between two nodes, we can aggregate them together and see them as a single super node.  In Situation 1 $\&\ k_A^R>0$, depending on the places where secure links can be formed, it leads to different policies for $D$ at phase 0. We illustrate the design principle for Situation 1 $\&\ k_A^R>0$ in Fig. \ref{network_contraction}. In this example, $|\mathcal{E}_1|=5$ is sufficient in the constrained scenario for $D$ to construct a secure network at time $0$, while it requires $|\mathcal{E}_1|=6$ links in the unconstrained counterpart.

\begin{figure}[t]
\centering
\includegraphics[width=1\columnwidth]{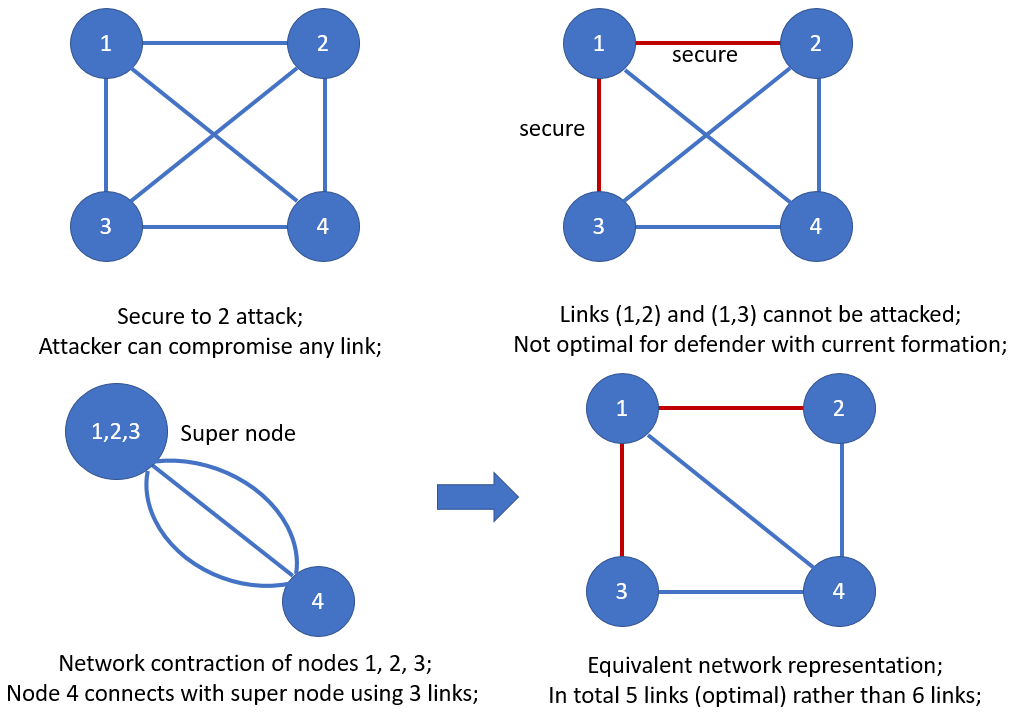}
\caption{Illustration of network contraction for designing $D$'s optimal strategy when a subset of nodes can form secure links with others. In the example, 6 links are required for the network being resistant to 2 link removals if $A$ can compromise any link.  When links (1,2) and (1,3) cannot be attacked,  nodes 1, 2, and 3 can be aggregrated as a super node by network contraction. Then, node 4 connects with the super node using 3 links. In sum, 5 links are sufficient for this constrained scenario which is different from the unconstrained case.}\label{network_contraction}
\end{figure}

\section{Network Resilience and Strategic Attack}\label{resilience}
In this section, we investigate the impact of network resilience on the SPE of the dynamic game and the attacker's behavior on the timing of attack.

\subsection{Resilience Planning}\label{resilience_analysis}
The infrastructure network resilience is measured by the response and recovery time after the cyber attack which is $\tau_R$ in our scenario. Thus, instead of merely maximizing $U_D$, the network operator should also take resilience metric $\tau_R$ into account. Thus, the aggregated objective function of $D$ can be formulated as follows:
\begin{equation}
F_D({\mathcal E}_1, {\mathcal E}_2, {\mathcal E}_A,\tau_R) = U_D({\mathcal E}_1, {\mathcal E}_2, {\mathcal E}_A) - R_D(\tau_R),
\end{equation}
where $R_D: [0,1]\rightarrow [0,1]$ quantifies the normalized system resilience cost. Specifically, $R_D$ is a monotonically decreasing function with respect to $\tau_R$. By considering the SPE of the dynamic game, $D$ chooses the best $\tau_R$ that results in an optimal utility $F_D$.

Based on Section \ref{SPE_analysis}, we obtain the following results. In regime 1 with agile resilience, i.e., $\tau_R<1-\tau-(n-1) c_D$, the utilities of $D$ under various SPE are summarized in Table \ref{tab:regime1_F_D}.

\begin{center}
\begin{table}[h]
{\small
$$\hspace{-1em}\begin{array}{|@{}c@{\,}|@{\,}c@{\;\;}|}
\mathrm{Situation} & F_D \\
\hline
1 \& k_A^R>0 &  1-c_D \ceil*{\frac{n(k_A^R+1)}{2}}-R_D(\tau_R)\\
1 \& k_A^R=0 & 1-(n-1) c_D-R_D(\tau_R)\\
2  & 1- \tau_R-n c_D-R_D(\tau_R)\\
4  & 1-\tau-\tau_R-(n-1)c_D-R_D(\tau_R)\\
\hline
\end{array}$$
}
\hfill{}
\caption{Utilities of $D$ under different potential SPE when $1-\tau-\tau_R > (n-1)c_D$ (Note: $k_A^R = \floor*{\frac{\tau_R}{c_A}}$).}
\label{tab:regime1_F_D}
\end{table}
\end{center}

Similarly, in regime 2 with $\tau_R>1-\tau-(n-1) c_D$, $D$'s utilities with different scenarios are presented in Table \ref{tab:regime2_F_D}.

\begin{center}
\begin{table}[h]
{\small
$$\hspace{-1em}\begin{array}{|@{}c@{\,}|@{\,}c|}
\text{Situation} & F_D \\
\hline
1 & 1-c_D \delta-R_D(\tau_R)  \\
2 & 1- \tau_R-n c_D-R_D(\tau_R) \\
3 & \tau-(n-1)c_D-R_D(\tau_R) \\
5 & 0\\
\hline
\end{array}$$
}
\hfill{}
\caption{Utilities of $D$ under different potential SPE when $1-\tau-\tau_R < (n-1)c_D$ (Note: $\delta$ is given by Eq.~\eqref{eq:delta}).}
\label{tab:regime2_F_D}
\end{table}
\end{center}

\textit{Remark:} Under different regimes and situations, the aggregated payoff $F_D$ of $D$ admits various forms. Comparing the values of $F_D$ in Tables \ref{tab:regime1_F_D} and \ref{tab:regime2_F_D}, the designer selects a $\tau_R$ that yields the largest $F_D$, and the corresponding SPE strategies can be determined based on Tables \ref{tab:SPE} and \ref{tab:SPE2}.

\subsection{Strategic Timing of Attack}
The attacker's behavior depends on the recovery ability of the network. When $A$ decides to compromise the network, then choosing the attacking phase $\tau$ also becomes a critical issue. Specifically, for a given $\tau_R$, $A$ needs to decide the value of $\tau$. As shown in Lemma \ref{lem:situations}, $A$ compromises the network only if $D$ creates a connected network initially. Thus, we focus on two Situations: 2 and 3. Proposition \ref{prop:1} indicates that when Situation 2 is an SPE, the corresponding utility of $A$ is $U_A=\tau_R-c_A$ which does not depend on the attacking phase $\tau$. In an SPE of Situation 3, $D$ does not heal the network after attack, and the utility of $A$ is $U_A=  1-\tau-(k+1)c_A$. Hence, the timing of attack $\tau$ has an influence on $A$'s payoff. In another case when SPE takes a form of Situation 4, $A$'s utility is $\tau+\tau_R$ which is also influenced by the attacking phase. Despite that $A$ does not attack, its action induces a threat to the network. We summarize the results in the following Lemma.

\begin{lemma}\label{lem:attack_time}
When SPE of the game admits a form of Situation 3, then the best timing of attack for $A$ is  to choose the smallest $\tau$ in the set $\{\tau\big|\tau\geq \frac{1-\tau_R}{(n-1)c_D},\ \floor*{\frac{1-\tau}{c_A}}\geq\floor*{\frac{1-\tau-\tau_R}{c_D}}+1\}$. When SPE takes a form of Situation 4, then the best $\tau$ for $A$  is choosing the largest value in the set $\{ c_D, 1-\tau_R-(n-1)c_D, c_D \ceil*{\frac{n(k_A^R-1)}{2}+1} -\tau_R \}$. When SPE of the game is of another form except for Situations 3 and 4, then $\tau$ does not affect the utility of $A$.
\end{lemma}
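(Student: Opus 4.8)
The plan is to analyze each of the three relevant cases separately, always treating $\tau_R$ as fixed and $\tau$ as the decision variable of $A$, and in each case tracking exactly which $\tau$ maximize $A$'s equilibrium payoff subject to the constraints that make the corresponding Situation an actual SPE. The starting point is Lemma~\ref{lem:situations}: the attacker only effectively compromises a connected network, so among the Situations in Table~\ref{tab} the payoff of $A$ can genuinely depend on $\tau$ only in Situations $3$ and $4$ (and we note Situation~4 is a "latent threat" case where $A$ does not literally attack). In Situations $1$, $2$, and $5$ the equilibrium $U_A$ is, respectively, $0$, $\tau_R-c_A$, and $1$ — none of which involve $\tau$ — which disposes of the last clause of the Lemma immediately.

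For Situation~3, I would recall from Lemma~\ref{lem:penible}(i) and Table~\ref{tab:SPE2} that the equilibrium payoff is $U_A = 1-\tau-(k+1)c_A$ with $k=\floor*{\frac{1-\tau-\tau_R}{c_D}}$, and that this SPE requires $\floor*{\frac{1-\tau}{c_A}}\geq k+1$ together with the regime-2 condition $1-\tau-\tau_R < (n-1)c_D$, i.e. $\tau > \frac{1-\tau_R}{(n-1)c_D} - \text{(const)}$; I would write the regime condition in the normalized form $\tau \ge \frac{1-\tau_R}{(n-1)c_D}$ used in the statement (consistent with how the boundary is treated as common knowledge, cf. the Remark after Table~\ref{tab:SPE2}). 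Since $U_A$ is strictly decreasing in $\tau$ for fixed $k$, and since decreasing $\tau$ can only weakly increase $k$ — which would further penalize $A$ through the $-(k+1)c_A$ term — the attacker prefers the \emph{smallest} feasible $\tau$. Hence the optimum is the infimum of the feasibility set $\{\tau : \tau\geq \frac{1-\tau_R}{(n-1)c_D},\ \floor*{\frac{1-\tau}{c_A}}\geq\floor*{\frac{1-\tau-\tau_R}{c_D}}+1\}$, exactly as stated; I would note that this set is closed (finite union of intervals cut out by floor functions), so the infimum is attained.

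For Situation~4, Proposition~\ref{prop:1}(2)(iii) gives $U_A = \tau+\tau_R$, which is strictly \emph{increasing} in $\tau$, so $A$ wants $\tau$ as large as possible; the binding constraint is the set of inequalities under which Situation~4 is the SPE rather than Situation~1 or Situation~2. From Proposition~\ref{prop:1}, Situation~4 arises precisely when $\tau<c_D$ and $\tau+\tau_R<c_D\ceil*{\frac{n(k_A^R-1)}{2}+1}$, and one must also respect the regime boundary $\tau+\tau_R < 1-(n-1)c_D$ (equivalently $\tau < 1-\tau_R-(n-1)c_D$) for this branch of the analysis; taking $\tau$ to the largest value consistent with all three gives $\tau = \max\{c_D,\ 1-\tau_R-(n-1)c_D,\ c_D\ceil*{\frac{n(k_A^R-1)}{2}+1}-\tau_R\}$ — here the supremum is the point at which $A$ is on the verge of switching the equilibrium out of Situation~4, and the tie-breaking convention (the Remark after Table~\ref{tab:SPE2}) lets us include the endpoint. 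I would state explicitly that each term is the upper edge of one of the three defining constraints, rewritten so that $\tau$ is isolated.

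The main obstacle is the Situation~3 monotonicity argument: one has to be careful that lowering $\tau$ does not discontinuously help $A$ via the floor in $k=\floor*{\frac{1-\tau-\tau_R}{c_D}}$. The clean way to see it is to observe $U_A = 1-\tau-(k+1)c_A \le 1-\tau-\frac{1-\tau-\tau_R}{c_D}\cdot\frac{c_A}{1} $ is not quite right, so instead I would argue directly: on each maximal subinterval where $k$ is constant, $U_A$ is affine decreasing in $\tau$; across a jump of $k$ (as $\tau$ decreases, $k$ increases by $1$), $U_A$ drops by $c_A$ minus the change in $-\tau$, and since the feasibility constraint $\floor*{\frac{1-\tau}{c_A}}\ge k+1$ forces $1-\tau \ge (k+1)c_A$, the attacker can never gain by crossing into a higher-$k$ region. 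Therefore the global maximizer over the feasible set is the smallest feasible $\tau$, completing the proof. The remaining cases (Situations 1, 2, 5) require only the one-line observation that $U_A$ is $\tau$-independent there.
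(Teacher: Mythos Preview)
Your overall plan matches the paper's: you split by Situation, discard Situations~1, 2, 5 by direct inspection of the equilibrium $U_A$, and for Situations~3 and 4 you reduce to a monotone one-variable optimization in $\tau$, with the boundary handled via the tie-breaking convention. That is precisely how the paper argues, and your treatment of the last clause and of the boundary inclusion is consistent with it.

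The genuine gap is in your Situation~3 monotonicity step. You correctly note that on each interval where $k=\lfloor(1-\tau-\tau_R)/c_D\rfloor$ is constant, $U_A=1-\tau-(k+1)c_A$ is decreasing in $\tau$, and that as $\tau$ decreases across a jump, $k$ rises by one and $U_A$ drops by $c_A$. But your conclusion from this---``the attacker can never gain by crossing into a higher-$k$ region''---points in the \emph{wrong direction}: a higher-$k$ region is a smaller-$\tau$ region, so if crossing never helps, the optimizer should stay in the \emph{largest}-$\tau$ feasible interval and take its left endpoint, not the globally smallest $\tau$. The inequality $1-\tau\ge(k+1)c_A$ you invoke only certifies $U_A\ge 0$; it does not compare values of $U_A$ across adjacent $k$-plateaus. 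As written, the intermediate claim and the final conclusion are inconsistent, so the step does not go through. (The paper's own proof simply asserts ``a smaller $\tau$ yields a higher payoff'' without dissecting the floor jumps; you are right that more care is warranted, but the argument you supply is not it.) A related slip appears in your Situation~4 paragraph: the largest $\tau$ consistent with three simultaneous upper bounds $\tau<a$, $\tau<b$, $\tau<c$ is $\min\{a,b,c\}$, not $\max\{a,b,c\}$, so the sentence ``taking $\tau$ to the largest value consistent with all three gives $\tau=\max\{\ldots\}$'' does not follow as stated.
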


\begin{proof}
The attacker chooses a $\tau$ to maximize its utility $1-\tau-(\floor*{\frac{1-\tau-\tau_R}{c_D}}+1)c_A$ while satisfying the conditions $\floor{\frac{1-\tau}{c_A}}\geq \floor*{\frac{1-\tau-\tau_R}{c_D}}+1$ and $1-\tau-\tau_R<(n-1)c_D$. The objective function indicates that a smaller $\tau$ yields a higher payoff of $A$. Thus, the best timing of attack is the smallest $\tau$ resulting in an SPE of Situation 3. We relax the strict inequality constraint by including the boundary, since when $\tau= \frac{1-\tau_R}{(n-1)c_D}$, $D$ does not heal the network and Situation 3 is still an SPE. Similarly,  in Situation 4, those boundary values of $\tau$ at the inequality constraint are feasible since $D$ chooses not to create a connected network if the payoffs are the same. 
\end{proof}

In Situation 3, $A$ prefers to attack the network in an early phase which aligns with the fact that $D$ does not recover the network, and hence $A$ receives the total rewards after $\tau$. In contrast, $A$ chooses to compromise the network at a larger phase $\tau$ in Situation 4 (though he does not really attack since the network is not connected), which extracts all the utility from time 0 to $\tau+\tau_R$.

\section{Case Studies}\label{case_study}
In this section, we use case studies of UAV-enabled communication networks to corroborate the obtained results. UAVs become an emerging technology to serve as communication relays, especially in disaster recovery scenarios in which the existing communication infrastructures are out of service \cite{tuna2014unmanned}.
In the following, we consider a team of $n=10$ UAVs. The normalized unitary costs of creating and compromising a communication link between UAVs for the operator/defender and adversary are $c_D = 1/20$ and $c_A = 1/8$, respectively.

\begin{figure}[t]
\centering
\includegraphics[width=0.85\columnwidth]{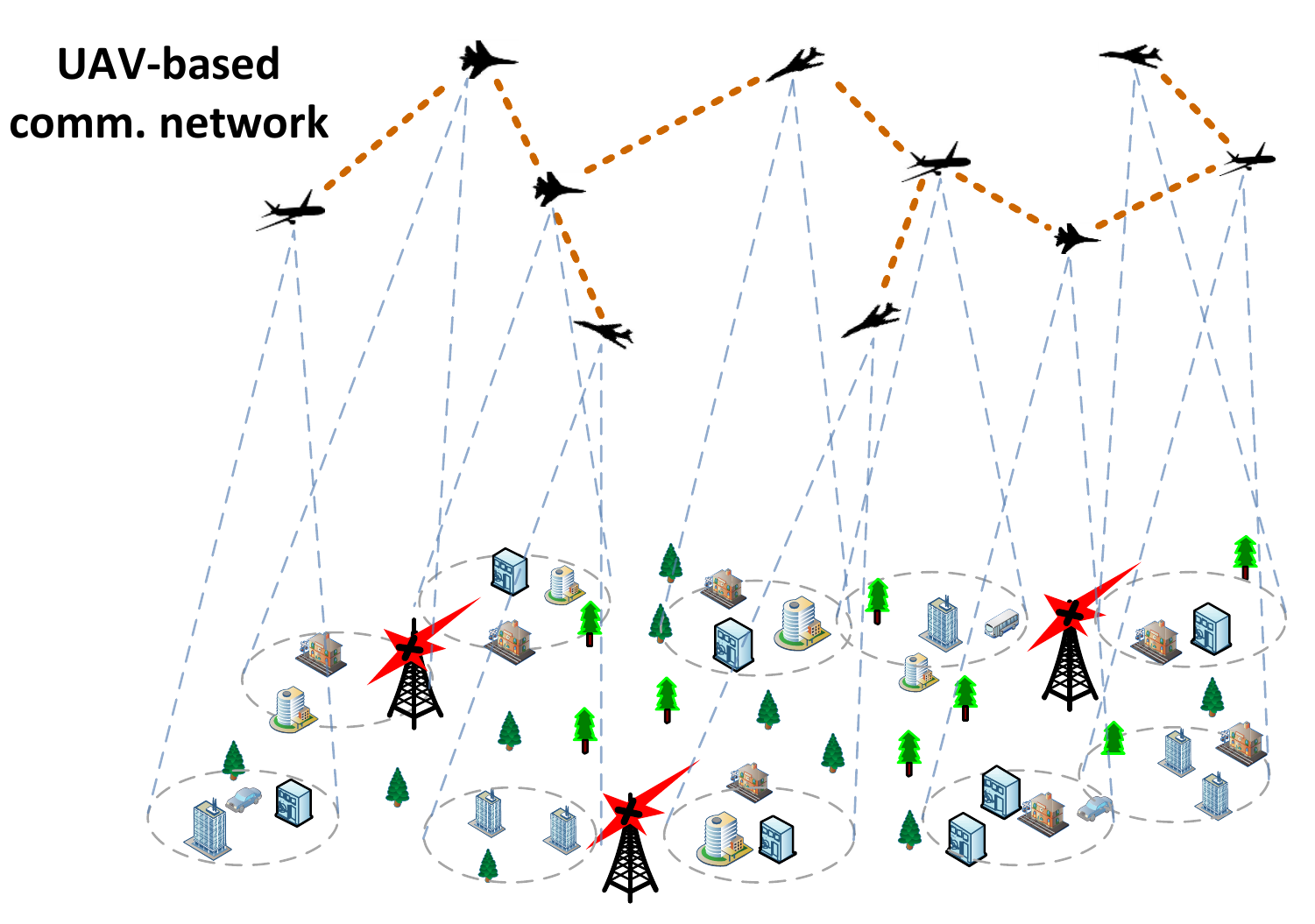}
\caption{UAV-enabled communication networks for disaster recovery. The UAVs form a tree network at SPE ($\tau=0.3$, $\tau_R=0.2$).}\label{UAV}
\end{figure}

\subsection{Illustrations of SPEs (Results in Section \ref{SPE_analysis})}
First, we illustrate SPE of the game when network resilience cost and timing of attack are not considered (results in Section \ref{SPE_analysis}). Specifically, the adversary attacks the network at phase $\tau = 0.3$, and the defender heals it after $\tau_R = 0.2$. The UAV-enabled communication network configuration at SPE is shown in Fig. \ref{UAV} which admits a tree structure, and $A$ does not attack the network at SPE. In addition, the utilities for $D$ and $A$ at SPE with $\tau_R\in[0, 0.6]$ are shown in Fig. \ref{utility_SPE}. The SPE encounters switching with different $\tau_R$. As $\tau_R$ increases, the UAV network operator needs to allocate more link resources to secure the network. Otherwise, the attacker has an incentive to compromise the communication links with a positive payoff. Specifically, when $\tau_R<0.375$, $A$ does not attack the UAV network, and $D$ obtains a positive utility by constructing a securely connected network. The secure network admits various structures depending on $\tau_R$. As shown in Fig. \ref{utility_SPE}, it can be in a tree network or a Harary network and the SPEs are in Situation 1. When $0.375<\tau_R<0.5$, the defender creates a connected network with the minimum effort, i.e., $9$ links, at phase $0$. In this interval, the attacker will successfully compromise the system during phase $[\tau,\tau+\tau_R]$, and the defender heals the network afterward. The initially connected network in this regime admits a tree structure, and it may not be the same as the one created in the regime of $\tau_R<0.375$. When $\tau_R$ exceeds $0.5$, the defender does not either protect or heal the network. The reason is that a larger $\tau_R$ provides more incentives for the attacker to compromise the links and receive a higher payoff. Furthermore, the aggregated utility for the defender from two intervals, i.e., from the initial phase to the attacking phase and from the recovery phase to the terminal phase, is small, and hence it does not provide sufficient incentive for the defender to protect and recover the network. This also indicates that agile resilience is critical in mitigating cyber threats in the infrastructure networks.

\begin{figure}[t]
\centering
\includegraphics[width=0.95\columnwidth]{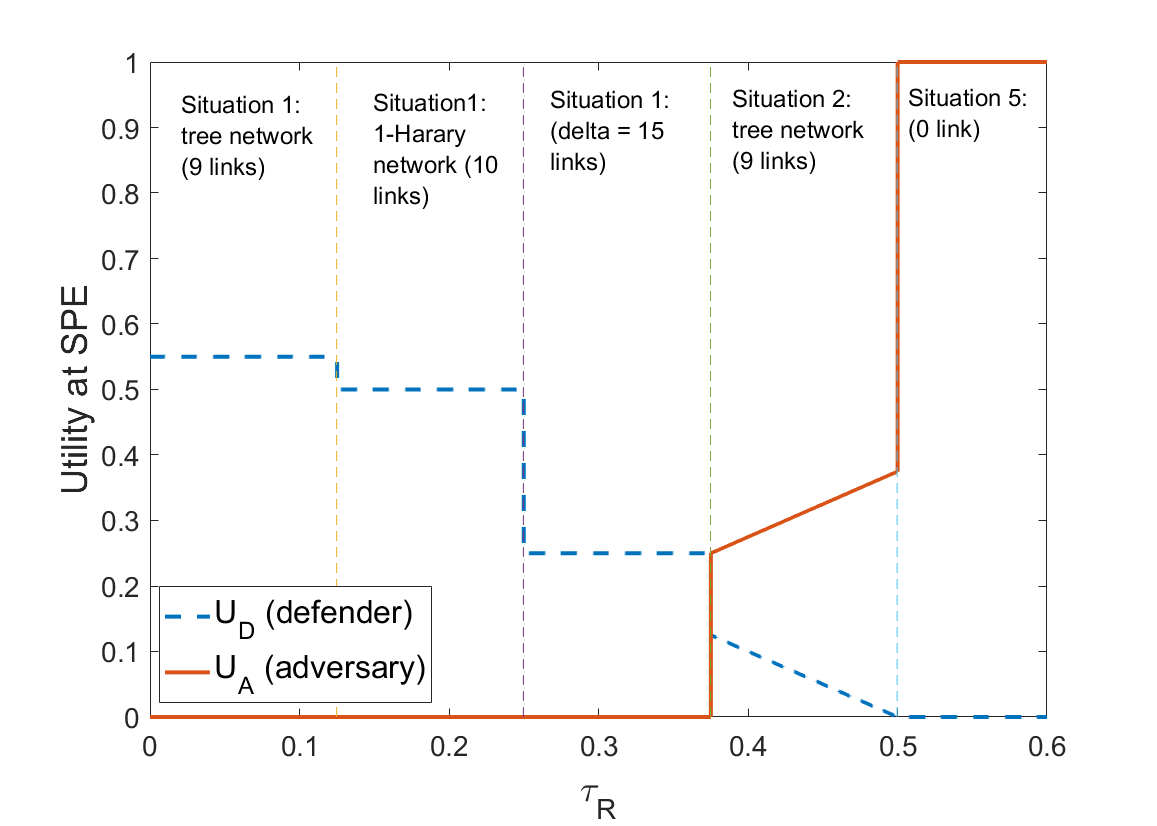}
\caption{Utilities for $D$ and $A$ at SPE with varying $\tau_R$. The SPEs and the strategies of $D$ and $A$ are different with the increase of $\tau_R$.}\label{utility_SPE}
\end{figure} 

\subsection{Strategic Resilience Planning}
Next, we take into account the cost of network resilience and study its impact on the SPE. The cost function of resilience is $R_D(\tau_R) = (\tau_R-1)^4$. The convexity of $R_D$ indicates that the marginal cost of resilience increases as $\tau_R$ decreases. The timing of attack is fixed to $\tau=0.3$ in this case study. The equilibrium strategies of both players under costly network resilience are illustrated in Fig. \ref{resilience_SPE}. Based on the analysis in Section \ref{resilience_analysis}, $D$ chooses a $\tau_R$ that maximizes the net utility $F_D$. Though $U_D$ is larger in a regime with smaller values of $\tau_R$, the cost of agile network resilience is much higher for it being the best strategy of designer. In addition, the defender will not choose a $\tau_R$ in the intervals $[0,0.16]\cup[0.375,0.6]$ since $F_D$ is negative. Hence, the optimal resilience planning of $D$ is $\tau_R = 0.25$ which yields the optimal payoff $F_D = 0.183$. At this SPE, which falls into Situation 1, $D$ creates a $(10,1)$-Harary network using 10 links initially and $A$ does not attack.

\begin{figure}[t]
\centering
\includegraphics[width=0.9\columnwidth]{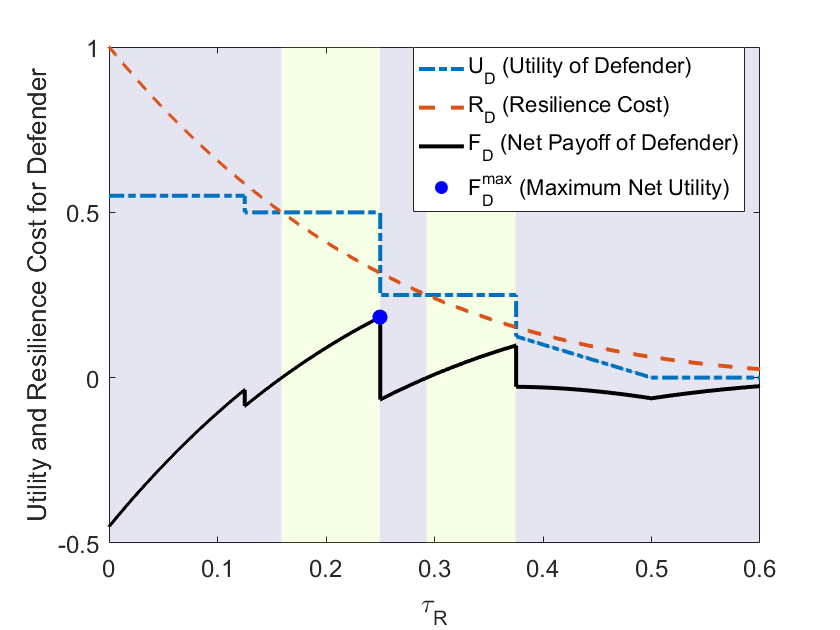}
\caption{Defender's utility with varying $\tau_R$ by considering the resilience cost. The optimal resilience planning is achieved at $\tau_R = 0.25$. Values of $\tau_R$ in the interval $[0,0.16]\cup[0.375,0.6]$ are not feasible since $F_D$ is negative.}\label{resilience_SPE}
\end{figure} 

\begin{figure}[t]
  \centering
  \subfigure[$\tau=0.5$]{
    \includegraphics[width=0.9\columnwidth]{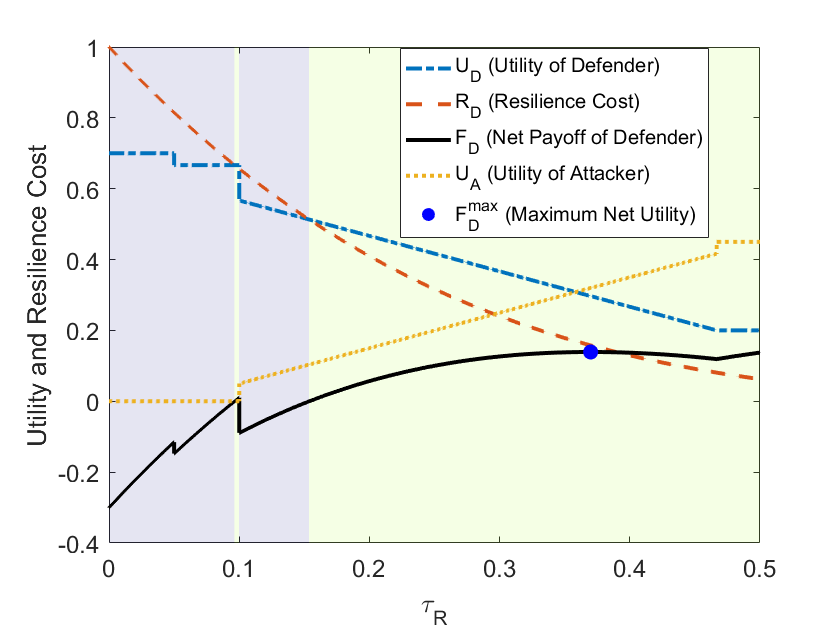}\label{attacking_tau5}}
	 \subfigure[$\tau=0.55$]{
    \includegraphics[width=0.9\columnwidth]{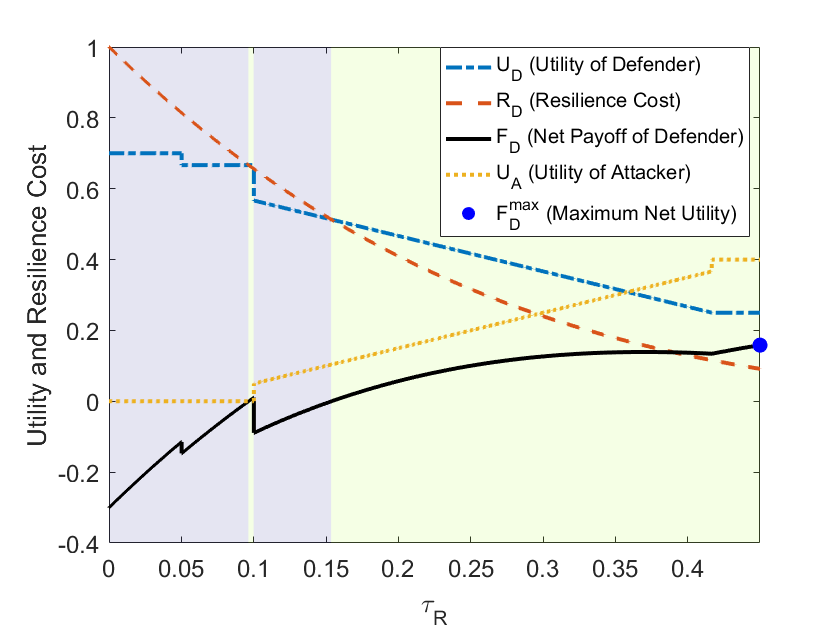}\label{attacking_tau55}}       
  \caption[]{(a) and (b) illustrate the strategies of $D$ and $A$ with different $\tau$. The SPE under optimal resilience planning switches from Situation 2 to Situation 3 as $\tau$ changes from 0.5 to 0.55. In both (a) and (b), the shaded grey areas, i.e., $\tau_R\in[0,0.096]\cup[0.01,0.154]$, are not feasible due to the negative value $F_D$.}
  \label{attacking_example}
\end{figure}

\subsection{Strategic Attacks and Resilience Planning}
We finally investigate the strategic attack behavior of $A$. In the following, the costs of creating and compromising a communication link are selected as $c_D = 1/30$ and $c_A = 1/20$, respectively. The SPEs and the corresponding utilities with $\tau = 0.5$ and $\tau=0.55$ are shown in Fig. \ref{attacking_example}. Specifically, Fig. \ref{attacking_tau5} shows that the optimal resilience planning for $D$ at $\tau = 0.5$ is $\tau_R=0.37$, leading to $F_D = 0.14$. Note that this SPE, where $D$ constructs a tree network at phase 0 and $A$ attacks one link at $\tau$ with $D$ healing the network afterward, belongs to Situation 2 in regime 2 as shown in Table 
\ref{tab:regime2_F_D}. As $\tau=0.55$, the best resilience planning of $D$ is to adopt $\tau_R = 0.45$ as illustrated in Fig. \ref{attacking_tau55}. At this SPE,  which is a case of Situation 3 in regime 2, $D$ creates a tree network initially and $A$ attacks one link at $\tau$, and $D$ does not recover the network. We can see that the SPE under optimal resilience planning switches from Situation 2 to Situation 3 as $\tau$ increases. In addition, the utility of $A$ varies under different SPEs. Based on Lemma \ref{lem:attack_time}, for an SPE in Situation 3, the attacker can increase its utility by choosing an appropriate $\tau$. Thus, we study the impact of attacking phase $\tau$ on the SPE of the game, and the result is depicted in Fig. \ref{optimal_attack_resilience}. Note that the utility of $D$ is optimal under each $\tau$ in the sense that the resilience cost $R_D$ is considered. When $\tau\in[0.4,0.515)$, the SPE belongs to Situation 2, and the optimal utilities of $D$ and $A$ remain as constants, where the resilience metric is given by $\tau_R = 0.37$. When $\tau\in[0.515,0.6]$, the SPE switches to a case of Situation 3. In this interval, $D$ does not recover after the attack and $\tau_R = 1-\tau$.
Furthermore, the optimal timing of attack is selected as $\tau = 0.515$, leading to $U_A = 0.435$, the largest utility of $A$. The result is in consistence with Lemma \ref{lem:attack_time}, indicating that a smaller $\tau$ yields a higher utility of $A$ when SPE admits a form of Situation 3. 

We next investigate the SPEs under the optimal resilience planning of $D$ and the strategic timing of attack of $A$ together over varying cost ratio $c_A/c_D$. We fix $c_D = 1/30$ and the ratio $c_A/c_D$ varies. Figure \ref{resilience_with_attacking_varying_cost} shows the obtained results. As the cost ratio $c_A/c_D$ increases, the utility of $A$ decreases monotonically. When $c_A/c_D\in [1, 2.2]$, the strategies of $D$ and $A$ does not change and the SPE belongs to Situation 3. Since the optimal $\tau_R$ and $\tau$ stay the same in this interval, the utility of $D$ remains unchanged. When $c_A/c_D\in [2.2, 2.4]$, the SPE switches to Situation 2 and $D$ heals the network after the attack. Since the recovery is agile ($\tau_R$ becomes smaller), $D$'s utility increases and $A$'s utility decreases dramatically in this interval. Furthermore, the strategic timing of attack $\tau$ varies to account for the better recovery speed $\tau_R$ and the increasing cost of attack.

\begin{figure}[t]
\centering
\includegraphics[width=0.9\columnwidth]{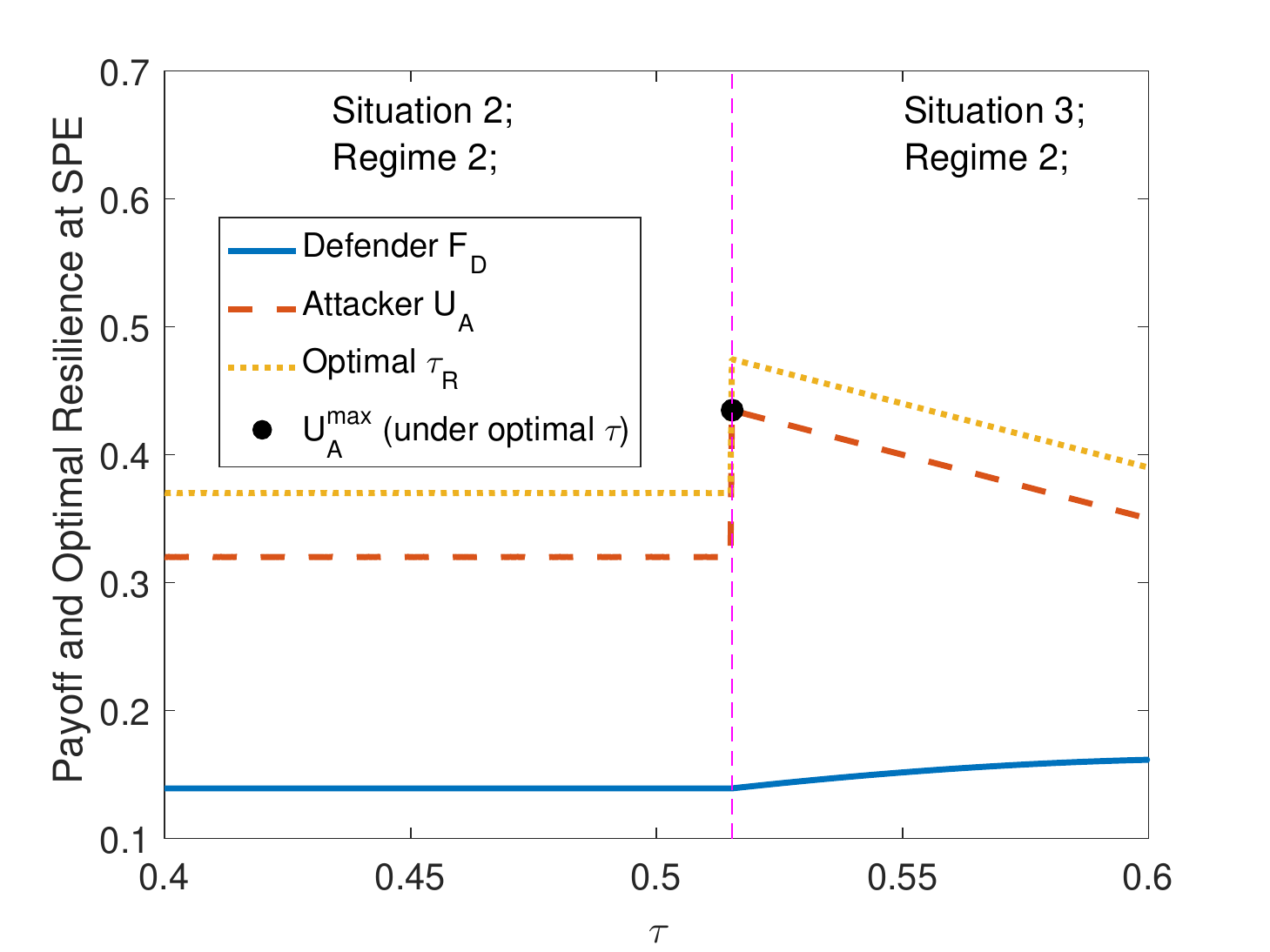}
\caption{Players' utilities at SPE with varying $\tau$ under the optimal resilience planning. The best timing of attack is $\tau = 0.515$ with optimal $\tau_R = 0.37$.}\label{optimal_attack_resilience}
\end{figure} 

\begin{figure}[t]
\centering
\includegraphics[width=0.9\columnwidth]{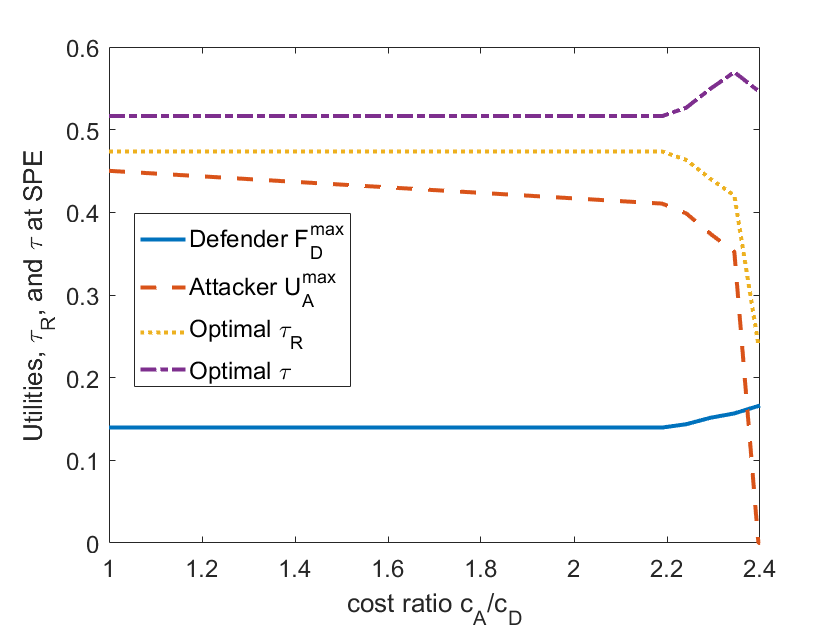}
\caption{Players' utilities, optimal resilience planing $\tau_R$, strategic timing of attack $\tau$, at SPE with varying $c_A/c_D$.}\label{resilience_with_attacking_varying_cost}
\end{figure}

\section{Conclusion}\label{conclusion}
In this paper, we have established a two-player three-stage dynamic game for the infrastructure network protection and recovery. We have characterized the strategic strategies of the network defender and the attacker by analyzing the subgame perfect equilibrium (SPE) of the game. With case studies on UAV-enabled communication networks for disaster recovery, we have observed that with an agile response to the attack, the defender can obtain a positive utility by creating a securely connected infrastructure network. Furthermore, a higher level resilience saves link resources for the defender and yields a better payoff. In addition, a longer duration between the attack and recovery phases induces a higher level of cyber threats to the infrastructures.  Future work would investigate dynamic games with incomplete information of the defender on the attacking time and attack cost.  Another direction is to design SPE strategies under the scenarios that the feasible action sets of both defender and attacker are constrained.

\appendices

\section{Proof of Lemma \ref{lem:sit1}}\label{appLemma}
\begin{proof}
When $\tau_R/c_A > n-1$, $A$ always attacks the network at phase $\tau$, and hence Situation 1 is not possible.
The SPE in Situation 1 satisfies $U_A^{(1)} > U_A^{(2)}$ and $U_A^{(1)} > U_A^{(3)}$. 
Thus, the goal of $D$ is to create a network with the minimal cost such that all nodes have a degree of at least $\floor*{\frac{\tau_R}{c_A}}+1 = k_A^R+1$, and at least $\floor*{\frac{1-\tau}{c_A}}+1 = k_A^H+1$ links need to be removed to yield a network with $k+2$ components (i.e., the minimum $(k+1)$-cut requires at least $\floor*{\frac{1-\tau}{c_A}}+1$ links). 
For $k_A^R\geq 1$, we consider the strategy of $D$ that consists in creating an $(|{\mathcal N}|, k_A^R+1)$-Harary network. Thus, 
\begin{equation}
|{\mathcal E}_1| \geq
\left\{
\begin{array}{ll}
\ceil*{\frac{n \left( k_A^R+1\right)}{2}} &\text{if}\  k_A^R \geq 2, \\  
n &\text{if}\ k_A^R = 1,\\
n-1 &\text{otherwise.}\\
\end{array}
\right. \label{eq:5}
\end{equation}

Let $k_D^H \equiv \floor*{\frac{1-\tau}{c_D}}$. First,
suppose that $k_D^H < k_A^H$, and at phase $0$, $D$ constructs a network with $(n-1)+\overline{k}$ links for some $\overline{k} \leq k_D^H$. Consider the strategy for $A$ that consists in attacking randomly $k_A^H$ links. Since $k_A^H > k_D^H \geq \overline{k}$, then the resulting network has less than $n-1$ links and is thus disconnected. At phase $\tau+\tau_R$, $D$ can reconstruct at most $ (n-1)+k^H_D-(n-1)-\overline{k} = k^H_D-\overline{k}$ links. Then, the network at phase $\tau+\tau_R$ would contain at most $(n-1)+\overline{k}-k_A^H+k^H_D-\overline{k} = (n-1)+k^H_D-k_A^H < n-1$ links, and the network is disconnected. Therefore,  no SPE exists in Situation $1$ if $k_D^H < k_A^H$.

Conversely, suppose that $k_D^H \geq k_A^H$. Then, we have
$k_A^R \leq \floor*{\frac{\tau_R}{c_D}}$.
Furthermore,
$
k_A^H  \leq k_D^H\Rightarrow
\frac{1}{c_A}-\frac{1}{c_D} < \frac{1}{1-\tau} \Rightarrow\ 
\frac{1-\tau-\tau_R}{c_A}-\frac{1-\tau-\tau_R}{c_D} < \frac{1-\tau-\tau_R}{1-\tau}<1
$,
which gives $\floor*{\frac{1-\tau-\tau_R}{c_A}} \leq k$.
Then, by definition, $k_A^H = \floor*{\frac{1-\tau}{c_A}} = \floor*{\frac{1-\tau-\tau_R}{c_A} + \frac{\tau_R}{c_A}} \leq \floor*{\frac{1-\tau-\tau_R}{c_A}} + \floor*{\frac{\tau_R}{c_A}} +1 \leq k + k_A^R+1$. Hence, we obtain $k_A^H \leq k + k_A^R +1$. Based on the obtained results, we next focus on four distinct cases and derive their corresponding SPEs.

\textbf{Case 1 ($k>0$ and $k_A^R> 1$):} If $k_A^R \geq 3$, then $k_A^R+1$ link removals are needed to disconnect the network, and any further additional component creation requires to remove at least $2$ links. Thus, at least $2k+k_A^R+1$ link removals are necessary so that the network has $k+2$ components. Then, based on $2k+k_A^R+1>k_A^H+1$, $A$ does not attack the network. 
If $k_A^R=2$, and if $k \leq \floor*{\frac{n}{2}}$, then at least $k_A^R+1+2k$ link removals are required, and otherwise (i.e., $k > \floor*{\frac{n}{2}}$) $k_A^R+1+k$ link removals are necessary. Thus, $A$ does not attack the network.

\textbf{Case 2 ($k=0$ and $k_A^R> 1$):} In this case, we have $k_A^H \leq k_A^R +1$. $A$ only needs to disconnect the network since $D$ does not heal due to $k=0$. Thus, if $k_A^R> 1$,  $D$ creates an $(|{\mathcal N}|, k_A^H+1)$-Harary network at phase 0.

\textbf{Case 3 ($k_A^R=0$):} In this case, if $k_A^H = k$, then $D$ creates a tree network which is an optimal strategy. Otherwise, $k_A^H = k+1$ in which case $D$ creates a ring network. 

\textbf{Case 4 ($k_A^R=1$):} In this scenario, if $k_A^H = k+1$, then the ring network, i.e., the $(|{\mathcal N}|, 2)$-Harary network, is optimal for $D$. Otherwise, if $k_A^H = k+2$, then $D$ needs to create a network of minimal cost such that no $k$ cut exists with $k+1$ links. To this end, we consider the following network. For each $i \in {\mathcal N}$, we create a link between nodes $i$ and $(i+1) \mod n$ (ring network). Then, we connect node $k$ to node $2k$, and connect node $2k$ to node $3k$, and so on. If $\floor*{\frac{n}{k}}$ is even, then we connect node $k\floor*{\frac{n}{k}}$ to node $0$. Otherwise, we connect node $0$ to any node of the network excluding $1$ and $n-1$.
Thus, the resulting network contains no $k$ cut of size $k+1$ links and is minimal in terms of the number of links. The resulting utility for $D$ is $U_D = 1-(n+\floor*{\frac{n}{k}}+\ceil*{\frac{\floor*{\frac{n}{k}}}{2}})c_D$. 

By defining $\delta$ as in \eqref{eq:delta}, the condition $1<\delta c_D$ ensures a positive utility for $D$ at SPE of Situation 1. The condition $1-\tau < (\delta-n+1) c_D$ guarantees that the SPE is achieved in Situation 1 instead of in Situation 3. 
\end{proof}

\bibliographystyle{IEEEtran}
\bibliography{IEEEabrv,ref}

\end{document}